% This is samplepaper.tex, a sample chapter demonstrating the
% LLNCS macro package for Springer Computer Science proceedings;
% Version 2.20 of 2017/10/04
%
\documentclass[runningheads]{llncs}
\usepackage{graphicx}
\usepackage{algorithm}
\usepackage{algpseudocode}
\usepackage{amssymb}
\usepackage{textcomp}
\usepackage{amsmath}

\newcommand{\V}{{\cal V}}
\newcommand{\cS}{{\cal S}}
\newcommand{\Prb}[1]{\mathbf{Pr} [#1]}
\newcommand{\Exp}[1]{\mathbf{E} [#1]}
\newcommand{\ceil}[1]{\lceil #1 \rceil}
\newcommand{\Endproof}{\hfill$\Box$\\}
\newcommand{\ket}[1]{\left| #1 \right\rangle}

\DeclareMathOperator{\RADIX}{RADIX}
\DeclareMathOperator{\SIGNUM}{SGN}
\DeclareMathOperator{\FST}{FST}
\DeclareMathOperator{\SG}{SG}
\DeclareMathOperator{\ADV}{ADV}
\DeclareMathOperator{\QC}{Q}
\DeclareMathOperator{\sgn}{sgn}

%\newtheorem{corollary}{Corollary}
% Used for displaying a sample figure. If possible, figure files should
% be included in EPS format.
%
% If you use the hyperref package, please uncomment the following line
% to display URLs in blue roman font according to Springer's eBook style:
% \renewcommand\UrlFont{\color{blue}\rmfamily}

\begin{document}
\title{Noisy Tree Data Structures and Quantum Applications
%\thanks{The research is funded by the subsidy allocated to Kazan Federal University for the state assignment in the sphere of scientific activities, project No. 0671-2020-0065.}
}
%
%\titlerunning{Abbreviated paper title}
% If the paper title is too long for the running head, you can set
% an abbreviated paper title here
%
\author{Kamil Khadiev\inst{1}\orcidID{0000-0002-5151-9908} \and
Nikita Savelyev\inst{1} \and Mansur Ziatdinov\inst{1}\orcidID{0000-0001-7415-2726}\and Denis Melnikov\inst{1}}
\authorrunning{K. Khadiev, N. Savelyev, M. Ziatdinov and D. Melnikov}
% First names are abbreviated in the running head.
% If there are more than two authors, 'et al.' is used.
%
\institute{Institute of Computational Mathematics and Information Technologies, Kazan Federal University, Kazan, Russia\\
\email{kamilhadi@gmail.com}}
\maketitle              % typeset the header of the contribution
\begin{abstract}
The paper presents a technique for constructing noisy data structures called a walking tree. We apply it for a Red-Black tree (an implementation of a Self-Balanced Binary Search Tree) and a segment tree. We obtain the same complexity of the main operations for these data structures as in the case without noise (asymptotically). We present several applications of the data structures for quantum algorithms.
Finally, we suggest new quantum solution for strings sorting problem and show the lower bound. The upper and lower bounds are the same up to a log factor. At the same time, it is more effective than classical counterparts. 

\keywords{noisy computation  \and self-balanced search tree \and segment tree \and quantum computing.}
\end{abstract}
\section{Introduction}
Tree data structures are well-known and used in different algorithms. At the same time, when we construct algorithms with random behavior like randomized and quantum algorithms, we should consider error probability. We suggest a general method for updating a tree data structure in the noisy case, and we call the method {\em Walking tree}.
For a tree of height $h$, we consider an operation for processing all nodes from a root to a target node. Assume that the running time of the operation is $O(h\cdot T)$, where $T$ is processing-a-node running time. Then, in the noisy case, our technique allows us to do it with $O(\log (1/\varepsilon)+h\cdot T)$ running time,  where $\varepsilon$ is the error probability for the whole operation. Here we assume that navigation by the tree can have error probability. Note that the standard way to handle probabilistic navigation is the success probability boosting  (repetition of the noisy action) with $O(h\cdot\log (1/\varepsilon)+h\cdot T)$ complexity.

Our technique is based on results for the noisy Binary search algorithm from \cite{frpu94}. The authors of that paper present an idea based on the random walk algorithm for a balanced binary tree that can be constructed for binary search algorithm. We generalize the idea for a tree with any structure that allows us to apply the method to a big class of tree data structures. Two examples of such data structures are presented in the next paragraph. Note that different algorithms for noisy search especially, a noisy tree and graph processing and search were considered in \cite{p1989,kk2007,eks2016,dlu2021,dms2019,bkr2016,dkuz2017}.

We apply the technique to two tree data structures. The first one is Red-Black tree \cite{cormen2001} which is an implementation of self-balanced binary search tree \cite{cormen2001}. If the key comparing procedure has an error probability at most $p<0.5-\delta$ for some $0<\delta<0.5$, then our noisy self-balanced binary search tree allows us to do add, remove, and search operations in $O(\log (N/\varepsilon))$ running time. Here $\varepsilon$ is the error probability for a whole operation and $N$ is the number of nodes in the tree. In the case of $\varepsilon=1/Poly(N)$, we have $O(\log N)$ running time. So, in that case, the noisy key comparing procedure does not affect running time (asymptotically). At the same time, if we use the success probability boosting technique, then the running time is $O((\log N)^2)$.
The second one is the Segment tree \cite{momm2008,l2017guide}. If the indexes comparing procedure has an error probability at most $p<0.5-\delta$ for some $0<\delta<0.5$, then our noisy segment tree allows us to do update and request operations in $O(\log (N/\varepsilon))$ running time, where $\varepsilon$ is the error probability for a whole operation and $N$ is the number of leaves. In the case of $\varepsilon=1/Poly(N)$, we have $O(\log N)$ running time. So, we obtain similar advantage.

We use these data structures in the context of quantum computation. Quantum computation \cite{a2017,nc2010,aazksw2019part1,k2022lecturenotes,dw2001,quantumzoo} is one of hot topics in the last decades. %There are many problems where we can obtain a quantum speed-up. Some of them can be founded here \cite{dw2001,quantumzoo}. %including graph problems \cite{dhhm2004,ks2019,kks2019,kksk2020,gnbk2021,kb2022} and string processing problems \cite{kk2021,k2021,rv2003,m2017,l2020,abikkpssv2020,kr2021a,kr2021b,aaksv2022}.
Quantum algorithms have randomized behavior, so it is important to use noisy data structures for this model. 
 We use the quantum query model \cite{a2017,aazksw2019part1,k2022lecturenotes} as the main computational model for the quantum algorithms. We use the Walking tree method to solve the following problems.
 
 The first problem is the String Sorting problem. Assume that we have $n$ strings of length $l$. We want to sort them in lexicographical order. It is known \cite{hns2001,hns2002} that no quantum algorithm can sort arbitrary comparable objects faster than $O(n\log n)$. At the same time, several researchers tried to improve the hidden constant \cite{oeaa2013,oa2016}. Other researchers investigated the space-bounded case \cite{k2003sort}. We focus on sorting strings. In the classical case, we can use an algorithm that is better than arbitrary comparable objects sorting algorithms. It is Radix sort that has $O(nl)$ query complexity \cite{cormen2001} for a finite-size alphabet. It is also a lower bound for classical (randomized or deterministic) algorithms that is $\Omega(nl)$. 
There is a quantum algorithm \cite{ki2019,kiv2022} for the string sorting problem that has query complexity $O(n(\log n)\cdot \sqrt{l})=\tilde{O}(n \sqrt{l})$, where $\tilde{O}$ does not consider log factors. We suggest a simpler implementation based on a noisy red-black tree. These algorithms show quantum speed-up. Additionally, we show the lower bound for quantum algorithms that is $\Omega(n\sqrt{l}/\sqrt{\log n})$. So, the quantum solution is optimal up to a log factor.
 
The second problem is the Auto-Complete Problem. We have two kinds of queries: adding a string $s$ to the dictionary $\cS$ and querying the most frequent complement of a string $t$ from the dictionary. We call $s$ a complement of $t$ if $t$ is a prefix of $s$. Assume that $L$ is the total sum of all lengths of strings from all queries.
%we wan We have a group of students that came to an exam. We process $n$ queries that can have one of three types:
%\begin{itemize}
%    \item 
%    (i) a new student comes to the group (input is the name of the student);
 %   \item 
% (ii) a student understands that the exam is too hard and leaves the group (input is the name of the student);
    %\item 
%    (iii) a professor asks a question from the student with the lexicographical minimal name, and this student leaves the group.
%\end{itemize}
%
We solve the problem using quantum string comparing algorithm 
\cite{bjdb2017,ki2019} 
%\cite{bjdb2017,ki2019,a2019,kiv2022,kkmsy2022,kl2020,l2020} 
and noisy Red-Black tree. The running time of the quantum algorithm is $O(\sqrt{nL}\log n)$. The lower bound for quantum running time is $\Omega(\sqrt{L})$. At the same time, the best classical algorithm based on trie(prefix tree)
\cite{knuth73} %\cite{d59,b98,b2008,knuth73} 
has $O(L)$ running time. That is also the classical (deterministic or randomize) lower bound $\Omega(L)$. So, we obtain quantum speed-up if most of the strings have $\Omega((\log n)^2)$ length.

The structure of this paper is as follows. Section \ref{sec:prelims} contains preliminaries. We present the main technique in Section \ref{sec:walking-tree}. Section \ref{sec:noisy-trees} contains a discussion of the noisy self-balanced binary search tree and the noisy segment tree. A discussion of String Sorting Problem is in Section \ref{sec:sort}, and a discussion of Auto-Complete Problem is in Section \ref{sec:appl}.
%The conclusion is presented in Section \ref{sec:concl}.

%%%%%%%%%%%%%%%%%%%%%%%%%%%%%%%%%%%%%%%
%   Preliminaries
%%%%%%%%%%%%%%%%%%%%%%%%%%%%%%%%%%%%%%%
\section{Preliminaries}\label{sec:prelims}
In the paper, for two strings $s$ and $t$, the notation $s<t$ means $s$ precedes $t$ in the lexicographical order. Let $|s|$ be the length of a string $s$.

\textbf{Graph Theory.}
Let us consider a rooted tree $G$. Let $\V(G)$ be a set of nodes (vertices), and ${\cal E}(G)$ be a set of edges. Let one fixed node be the root of the tree. Assume that we can obtain it using a procedure $\textsc{GetTheRoot}(G)$.
A path $P$ is a sequence of nodes $(v_1,\dots,v_k)$ that are connected by edges, i.e. $(v_i,v_{i+1})\in E$ for all $i\in\{1,\dots,k-1\}$. Note, that there are no duplicates among $v_1,\dots,v_k$. Here, $k$ is the length of the path.  We use $v\in P$ notation if there is $j$ such that $v_j=v$. The notation is reasonable because there are no duplicates in a path. Note that for any two nodes $u$ and $v$ the path between them is unique because $G$ is a tree. 
The distance $dist(v,u)$ between two nodes $v$ and $u$ is the length of the path between them. A height of a node $v$ is the distance between it and the root that is $dist(root,v)$. Let $h(G)=\max_{v\in{\cal V}(G)}dist(root,v)$ be the tree's height which is the length of the path between the root and the farthest node.

For each node $v$ we can define a parent node $\textsc{Parent}(v)$, it is a node such that $dist(root,\textsc{Parent}(v))+1=dist(root,v)$ and $(\textsc{Parent}(v),v)\in {\cal E}(G)$. Additionally, we can define a set of children $\textsc{Children}(v)=\{u: \textsc{Parent}(u)=v\}$.

\textbf{Quantum Query Model.}
In Section \ref{sec:appl} we suggest quantum algorithms as applications for our data structures. We have only one quantum subroutine, and the rest part of the algorithm is classical.   
One of the most popular computation models for quantum algorithms is the query model.
We use the standard form of the quantum query model. %is a generalization of the decision tree model of classical computation that is commonly used to lower bound the amount of time required by a computation. 
Let $f:D\rightarrow \{0,1\},D\subseteq \{0,1\}^M$ be an $M$ variable function. We wish to compute on an input $x\in D$. We are given oracle access to the input $x$, i.e. it is implemented by a specific unitary transformation that is usually defined as $\ket{i}\ket{z}\ket{w}\rightarrow \ket{i}\ket{z+x_i\pmod{2}}\ket{w}$ where the $\ket{i}$ register indicates the index of the variable we are querying, $\ket{z}$ is the output register, and $\ket{w}$ is some auxiliary work-space. The operation is implemented by the CNOT gate. An algorithm in the query model consists of alternating applications of arbitrary unitaries independent of the input and the query unitary, and measurement in the end. The smallest number of queries for an algorithm that outputs $f(x)$ with probability $\geq \frac{2}{3}$ on all $x$ is called the quantum query complexity of the function $f$, and  $Q(f)$ notation is used for it. We use the running time term instead of query complexity for removing confusion with ``query'' in the definition of problems in Section \ref{sec:appl}. Note that in the general case, we can consider a function $f$ with non-Boolean arguments. It can be simulated by a Boolean-argument-function case using a binary representation of arguments.
We refer the readers to \cite{nc2010,a2017,aazksw2019part1,k2022lecturenotes} for more details on quantum computing. 

%%%%%%%%%%%%%%%%%%%%%%%%%%%%%%%%%%%%%%%%%%%%%%%%%%%%%
%%%%%%%%%         Main Technique       %%%%%%%%%%%%%%
%%%%%%%%%%%%%%%%%%%%%%%%%%%%%%%%%%%%%%%%%%%%%%%%%%%%%
\section{Main Technique: A Walking Tree} \label{sec:walking-tree}
In this section, we present a rooted tree that we call {\em a walking tree}. It is a utility data structure for noisy computation for the main data structure. 
Here we use it for the following data structures:
%\begin{itemize}
%    \item 
%
(i) Binary Search Tree. We assume that elements comparing procedures can have errors.
 %   
    %\item 
    (ii) Segment Tree. We assume that an indexes (borders of segments) comparing procedure can have errors.
%\end{itemize}

Note that the walking tree is a general technique, and it can be used for other tree data structures.
Let us present the general idea of the tree. The technique is motivated by \cite{frpu94}.
Assume that we have a rooted tree $G$. We want to do an operation on the tree that is moving from the root to a specific ({\em target}) node of the tree.
Assume that we have the following procedures:
%\begin{itemize}
 %   \item 
 %
 $\textsc{GetTheRoot}(G)$ returns the root node of the tree $G$.
 %
 %   \item 
 $\textsc{SelectChild}(v)$ returns the child of the node $v\in \V (G)$ that should be reached from the node $v$. % TODO: SelectChild
 %
 %   \item 
 $\textsc{IsItTheTarget}(v)$ returns $True$ if the node $v\in \V (G)$ is the last node that should be visited in the operation; and returns $False$ otherwise. 
 %
 %   \item
 $\textsc{ProcessANode(v)}$ processes the node $v\in \V (G)$  in the required way.
 %   
    %\item 
    $\textsc{IsANodeCorrect(v)}$  returns $True$ if the node $v\in \V (G)$ should be visited during the the operation; and returns $False$ if the node is visited because of an error.
%\end{itemize}

Assume that the operation has the following form (Algorithm \ref{alg:maincomand}).

\vspace{-0.7cm}
\begin{algorithm}[ht]
    \caption{An operation on the tree $G$} \label{alg:maincomand}
    \begin{algorithmic}
        \State $v \gets \textsc{GetTheRoot}(G)$
        \State $\textsc{ProcessANode}(v)$
        \While{$\textsc{IsItTheTarget}(v)=False$}
        \State $v \gets\textsc{SelectChild}(v)$
         \State $\textsc{ProcessANode}(v)$
        \EndWhile
    \end{algorithmic}
\end{algorithm}

\vspace{-0.7cm}
Let us consider the operation such that ``navigation'' procedures (that are $\textsc{SelectChild}$, $\textsc{IsANodeCorrect}$, and $\textsc{IsItTheTarget}$) can return an answer with an error $p$, where $p<0.5-\delta$, where $0<\delta\leq 0.5$. We assume that error events are independent.
Our goal is to do the operation with an error $\varepsilon$. Note that in the general case, $\varepsilon$ can be non-constant and depend on the number of tree nodes.
Let $h=h(G)$ be the height of the tree. The standard technique is boosting success probability. On each step, we repeat $\textsc{SelectChild}$ procedure $O(\log (h/\varepsilon))$ times and choose the most frequent answer. In that case, the error probability of the operation is at most $\varepsilon$, and the running time of the operation is $O(h\log(h/\varepsilon)+h\cdot T)$, where $T$ is complexity of $\textsc{ProcessANode}$ procedure.
Our goal is to have $O(h+\log (h/\varepsilon)+h\cdot T) = O(\log (h/\varepsilon)+h\cdot T)$ running time.

Let us construct a rooted tree $W$ by $G$ such that the set of nodes of $W$ has a one-to-one correspondence to the nodes of $G$ and the same with sets of edges. We call $W$ a {\em walking tree}.
Let $\lambda_W:\V (W) \leftrightarrow \V (G)$ and $\lambda_G:\V (G) \leftrightarrow \V (W)$ be  bijections between these two sets. For simplicity, we define procedures for $W$ similar to the procedures for $G$. Suppose $u\in \V (W)$, then
$\textsc{GetTheRoot}(W)=\lambda_G(\textsc{GetTheRoot}(G));$ 
$\textsc{SelectChild}(u)=\lambda_G(\textsc{SelectChild}(\lambda_W(u)));$
$\textsc{IsItTheTarget}(u)=\textsc{IsItTheTarget}(\lambda_W(u));$  
$\textsc{IsANodeCorrect(u)}=\textsc{IsANodeCorrect}(\lambda_W(u)).$
Note that the navigation procedures are noisy (have an error). We reduce the error probability to $0.1$ by constant number of repetitions (using the boosting success probability technique).
Additionally, we associate a counter $c(u)$ with a vertex $u\in \V(W)$ that is a non-negative integer number. Initially, values of counters are $0$ for all nodes, i.e. $c(u)=0$ for each $u\in\V (W)$.

We invoke a random walk by the walking tree $W$. The walk starts from the root node $\textsc{GetTheRoot}(W)$. Let us discuss processing $u\in \V (W)$.
Firstly, we check the counter's value $c(u)$. If $c(u)=0$, then we do steps from 1.1 to 1.3. 
%\begin{itemize}
%    \item 
    
    {\bf Step 1.1.} We check the correctness of current node using $\textsc{IsANodeCorrect}(u)$ procedure.  If the result is $True$, then we go to Step 1.2. If the result is $False$, then we are here because of an error, and we go up by changing $u\gets \textsc{Parent}(u)$. If the node $u$ is the root, then we stay in $u$.
 
    %\item 
    {\bf Step 1.2.}  We check whether the current node is target using $\textsc{IsItTheTarget}(u)$ procedure. If it is $True$, then we increase the counter $c(u)\gets 1$. If it is  $False$, then we go to Step 1.3.
%    \item 

{\bf Step 1.3.} We go to the children $u\gets\textsc{SelectChild}(u)$. 
%\end{itemize}

If $c(u)>0$, then we do Step 2.1. We can say that the counter $c(u)$ is a measure of confidence that $u$ is the target node. If $c(u)=0$, then we should continue walking. If $c(u)>0$, then we think that $u$ is the target node. If a bigger value of $c(u)$ means we are more confident that it is the target node.
%\begin{itemize}
 %   \item 
    
    {\bf Step 2.1.} If $\textsc{IsItTheTarget}(u)=True$, then we increase the counter $c(u)\gets c(u)+1$.  Otherwise, we decrease the counter $c(u)\gets c(u)-1$. So, we become more or less confident in the fact that the node $u$ is the target. 
%\end{itemize}

The walking process stops in $s$ steps, where $s=O(h+\log(1/\varepsilon))$. The stopping node $u$ is the target one. After that, we do the operation with the original tree $G$. We store path in $Path=(v^1,\dots,v^k)$, such that $v^k=\lambda_W(u)$, $v^i=\textsc{Parent}(v^{i+1})$, and $v_1$ is the root node of $G$. Then, we process them using $\textsc{ProcessANode}(v^i)$ for $i$ from $1$ to $k$.
One step of the walking process on the walking tree $W$ is presented in Algorithm \ref{alg:onestep} as a procedure $\textsc{OneStep}(u)$ that accepts the current node $u$ and returns the new node.  The whole algorithm is presented in Algorithm \ref{alg:walking}.

%\vspace{-0.5cm}
\begin{algorithm}[ht]
    \caption{One step of the walking process, $\textsc{OneStep}(u)$. The input is $u\in \V(W)$ and the result is the node for the next step of the walking.} \label{alg:onestep}
    \begin{algorithmic}
        \If{$c(u)=0$}
           \If{$\textsc{IsANodeCorrect}(u)=False$}\Comment{Step 1.1}
              \If{$u\neq\textsc{GetTheRoot}(W)$}
                 \State $u\gets \textsc{Parent}(u)$
              \EndIf
           \Else
              \If{$\textsc{IsItTheTarget}(u)=True$} \Comment{Step 1.2}
                 \State $c(u)\gets 1$
              \Else
                 \State $u\gets\textsc{SelectChild}(u)$ \Comment{Step 1.3}
               \EndIf
           \EndIf
        \Else
           \If{$\textsc{IsItTheTarget}(u)=True$}\Comment{Step 2.1}
              \State $c(u)\gets c(u)+1$
           \Else
              \State $c(u)\gets c(u)-1$
           \EndIf
        \EndIf
    \end{algorithmic}
\end{algorithm}

%\vspace{-0.5cm}

%\vspace{-0.5cm}
\begin{algorithm}[ht]
    \caption{The walking algorithm for $s=O(h+\log(1/\varepsilon))$ steps.} \label{alg:walking}
    \begin{algorithmic}
        \State $u\gets \textsc{GetTheRoot}(W)$
        \For{$j \in\{1,\dots s\}$}
            \State $u\gets \textsc{OneStep}(u)$
        \EndFor
        \State $v\gets \lambda_W(u)$, $Path = (v)$, $k=1$
        \While{$v\neq \textsc{GetTheRoot}(G)$}
            \State $v\gets \textsc{Parent}(v)$
            \State $Path=v \circ Path$\Comment{Here $\circ$ means the concatenation of two sequences. The line adds the node to the beginning of the path sequence}
            \State $k\gets k+1$\Comment{The length of the path sequence}
        \EndWhile
        \For{$i\in\{1,\dots k\}$}
            \State $\textsc{ProcessANode}(v^i)$ 
        \EndFor
    \end{algorithmic}
\end{algorithm}
%\subsection{Analysis}
Let us discuss the algorithm and its properties.
On each node, we have two options, we go in the direction of the target node or the opposite direction.

Assume that $c(u)=0$ of the current node $u$.
If we are in a wrong branch, then the only correct direction is the parent node. If we are in the correct branch, then the only correct direction is the correct child node. All other directions are wrong.
Assume that $c(u)>0$.
If we are in the target node, then the only correct direction is increasing the counter, and the wrong direction is decreasing the counter. Otherwise, the only correct direction is decreasing the counter.

Choosing the direction is based on the results of at most three invocations of navigation procedures ($\textsc{SelectChild}$, $\textsc{IsANodeCorrect}$, and $\textsc{IsItTheTarget}$). Remember that we reach $0.1$ error probability using a constant number of repetitions. Due independence of error events, the total error probability of choosing a direction is at most $0.3$.
So,the  probability of moving in correct direction is at least $2/3$ and for a wrong direction it is at most $1/3$.
Let us show that if $s=O(h+\log(1/\varepsilon))$, then a error probability for an operation on $G$ is $\varepsilon$ 

\vspace{-0.2cm}
\begin{theorem}\label{th:wt-compl}
Suppose, $s=O(h+\log(1/\varepsilon))$. Then, Algorithm \ref{alg:walking} does the same action as Algorithm \ref{alg:maincomand} with error probability $\varepsilon\in (0,0.5)$. (See Appendix \ref{apx:wt-compl}.)
\end{theorem}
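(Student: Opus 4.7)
The plan is to encode the walking process as a biased one-dimensional random walk via a potential function, and then apply a Hoeffding-type concentration inequality; this is the tree-structured generalization of the analysis of \cite{frpu94}.

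First I would introduce a potential on the state of the walk. Let $u^{\ast}\in \V(W)$ be the image under $\lambda_G$ of the target node of $G$, and let $d(u)$ be the length of the unique path from $u$ to $u^{\ast}$ in $W$. Define
\[
\Phi(u,c) = \begin{cases} d(u) + c, & u \neq u^{\ast}, \\ -c, & u = u^{\ast}. \end{cases}
\]
Then $\Phi \leq 0$ holds iff $u = u^{\ast}$, and initially $\Phi_0 \leq h$. Note that by the design of $\textsc{OneStep}$ the walker never exits a node while its counter is positive (movement happens only in the $c(u)=0$ branch), so whenever the walker arrives at a node $v$ we have $c(v) = 0$, which makes $\Phi$ a function of the current state alone.

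Next I would do a case analysis of $\textsc{OneStep}$ along the axes ``$u = u^{\ast}$ or not,'' ``$u$ lies on the root-to-target path or in a wrong subtree,'' and ``$c(u) = 0$ or $c(u) > 0$.'' In every case the action taken when all invoked noisy sub-procedures return their intended answers decreases $\Phi$ by exactly $1$, while any other combination of answers changes $\Phi$ by at most $+1$ (the single outlier being when $u$ is the root and $\textsc{IsANodeCorrect}$ returns $\mathit{False}$, in which $\Phi$ is unchanged). Each sub-procedure has been boosted to error $0.1$ and at most three are queried per step, so independence plus a union bound shows that the correct combination occurs with probability at least $1 - 0.3 > 2/3$. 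Hence $\{\Phi_t\}$ satisfies $\Exp{\Phi_{t+1} - \Phi_t \mid \Phi_t} \leq -1/3$ and $|\Phi_{t+1} - \Phi_t| \leq 1$.

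For the tail bound, $M_t := \Phi_t + t/3$ is a supermartingale with bounded increments, so Azuma--Hoeffding gives
\[
\Prb{\Phi_s > 0} \;=\; \Prb{M_s - M_0 > s/3 - \Phi_0} \;\leq\; \exp\!\left(-\Omega\!\left(\frac{(s/3-h)^2}{s}\right)\right),
\]
and elementary algebra, splitting into the regimes $h \leq \log(1/\varepsilon)$ and $h > \log(1/\varepsilon)$, shows that this is below $\varepsilon$ for $s \geq C(h + \log(1/\varepsilon))$ with a sufficiently large absolute constant $C$. Since $\Phi_s \leq 0$ forces the walker to be at $u^{\ast}$, with probability at least $1 - \varepsilon$ the walking phase of Algorithm~\ref{alg:walking} ends at the target, and the subsequent reconstruction of $Path$ via $\textsc{Parent}$ pointers in $G$ is a deterministic, noise-free graph operation, so the closing loop of $\textsc{ProcessANode}$ calls replays exactly the sequence executed by Algorithm~\ref{alg:maincomand}.

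The main obstacle is the case analysis of the second step: one must verify that \emph{every} combination of noisy answers moves $\Phi$ by at most $+1$, and handle the edge cases where $u$ is the root (``go to parent'' is a no-op) and where $u = u^{\ast}$ (so $\textsc{SelectChild}$ has no intended return value but an arbitrary child still satisfies the $+1$ bound on $\Phi$). A secondary subtlety is that the number of noisy calls per step is not constant (up to three when $c(u)=0$, only one when $c(u)>0$), so the $2/3$ lower bound on ``correct direction'' must be checked branch by branch rather than by a single blanket union bound.
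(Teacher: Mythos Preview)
Your proposal is correct and follows essentially the same route as the paper. The paper makes your potential explicit by physically augmenting $W$ with a chain of $s{+}1$ dummy nodes hanging off every vertex (so the counter becomes actual tree distance) and then tracks the distance $L$ from the current position to the deepest dummy $d^{u^{\ast}}_{s+1}$; one checks that $L=\Phi+(s{+}1)$, so the two reductions to a one-dimensional biased walk are the same object in different clothing. For the tail bound the paper couples the step increments to an i.i.d.\ $\pm1$ sequence with bias exactly $2/3$ and applies a multiplicative Chernoff bound, whereas you run Azuma--Hoeffding on the supermartingale $\Phi_t+t/3$; your version is marginally cleaner because it sidesteps the ``without loss of generality $\Prb{Y_i=+1}=2/3$'' coupling step, but the two arguments are otherwise interchangeable.
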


In the next section, we show several applications of the technique.
%%%%%%%%%%%%%%%%%%%%%%%%%%%%%%%%%%%%%%%%%%%%%%%%%%%%%
%%%%%%%%%   Noisy Tree Data Structures   %%%%%%%%%%%%
%%%%%%%%%%%%%%%%%%%%%%%%%%%%%%%%%%%%%%%%%%%%%%%%%%%%%
\section{Noisy Tree Data Structures}\label{sec:noisy-trees}

%%%%%%%%%%%%%%%%%%%%%%%%%%%%%%%%%%%%%%%%%%%%%%%%%%%%%
%%%%%%%%%   Noisy Binary Search Tree   %%%%%%%%%%%%%%
%%%%%%%%%%%%%%%%%%%%%%%%%%%%%%%%%%%%%%%%%%%%%%%%%%%%%
\subsection{Noisy Binary Search Tree}
Let us consider a Self-Balanced Search Tree \cite{cormen2001}. It is a binary rooted tree $G$. Let $n=|\V(G)|$. We associate a comparable element $\alpha(v)$ with a node $v\in\V(G)$. (i) For a node $v\in\V(G)$, we have $\alpha(v')<\alpha(v)$ where $v'$ is from the left sub-tree  of $v$; and $\alpha(v'')>\alpha(v)$ where $v''$ is from the right sub-tree of $v$. 
    (ii) The height of the tree $h(G)=O(\log n)$.

As an implementation of {\em Self-Balanced} Search Tree, we use Red-Black Tree \cite{cormen2001,g78}.  It allows us to add and remove a node with a specific value with $O(\log n)$ running time. 
Assume that the comparing procedure of two elements has an error $p$.
Each operation (remove and add an element) has three steps: searching, doing the action (removing or adding), re-balancing. Re-balancing does not invoke comparing operations, that is why it does not have an error. So, the only ``noisy'' procedure (can have an error) is searching. Let us discuss it.

Let us associate $\beta(v)$ and $\gamma(v)$ with a node $v\in\V(G)$. That are left and right bounds for $\alpha(v)$ with respect to the ancestor nodes. Formally,
$\beta(v)<\min\{\alpha(v'):  v$ is an ancestor of $v'\}$, $\gamma(v)>\max\{\alpha(v'):  v$ is an ancestor of $v'\}$. We can compute them as follows. If $v$ is the root, then $\beta(v)=-\infty$,  and $\gamma(v)=+\infty$. Here $-\infty$ and $+\infty$ are constants that are apriori less and more than any $\alpha(v')$ for $v'\in \V(G)$. 
Let $v$ be a non-root node. If $v$ is the left child of $\textsc{Parent}(v)$, 
then $\beta(v)=\beta(\textsc{Parent}(v))$,
 $\gamma(v)=\alpha(\textsc{Parent}(v))$. 
  If $v$ is the right one, then $\beta(v)=\alpha(\textsc{Parent}(v))$, 
$\gamma(v)=\gamma(\textsc{Parent}(v))$.
Assume that a comparing function for elements $\textsc{Compare}(a,b)$ returns 
%\begin{itemize}
%    \item 
    $-1$ if $a<b$;
%    \item  
    $+1$ if $a>b$;
%    \item 
     $0$ if $a=b$.
%\end{itemize}
An error probability for the function is $p<0.5-\delta$ for some $0<\delta<0.5$.
Let us present each of the required procedures for searching an object $x$ operation.
%\begin{itemize}
 %   \item 
    $\textsc{GetTheRoot}(G)$ is for the root node of $G$.
  %  \item 
  $\textsc{SelectChild}(v)$ returns the left child of $v$ if $\textsc{Compare}(\alpha(v),x)=-1$; and returns the right child if $\textsc{Compare}(\alpha(v),x)=+1$.
%    \item
 $\textsc{IsItTheTarget}(v)$ returns $True$ iff $Compare(\alpha(v),x)=0$. 
%    \item 
$\textsc{ProcessANode(v)}$ do nothing.
 %   \item
  $\textsc{IsANodeCorrect(v)}$  returns $True$ iff $\beta(v)< x < \gamma(v)$, formally, $\textsc{Compare}(\beta(v),x)= -1$ and $\textsc{Compare}(x,\gamma(v))=-1$.
%\end{itemize}
%
The presented operations satisfy all requirements. Let us present the complexity result that directly follows from Theorem \ref{th:wt-compl}.

\vspace{-0.2cm}
\begin{theorem}\label{th:bst-compl}
Suppose the comparing function for elements of Red-Black Tree has an error $p<0.5-\delta$ for some $0<\delta<0.5$. Then, using the walking tree, we can do searching, adding and removing operations with $O(\log(n/\varepsilon))$ running time and an error probability $\varepsilon$.
\end{theorem}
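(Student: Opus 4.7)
The plan is to reduce the theorem to Theorem \ref{th:wt-compl} by showing that each of the three relevant Red-Black tree operations (search, add, remove) fits the template of Algorithm \ref{alg:maincomand} with the five procedures defined just above the theorem, and then that the height bound $h(G)=O(\log n)$ for a Red-Black tree lets us instantiate the bound $s = O(h + \log(1/\varepsilon))$ as $O(\log(n/\varepsilon))$.

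First I would handle the search operation. The walking-tree framework requires that each invocation of the navigation procedures ($\textsc{SelectChild}$, $\textsc{IsItTheTarget}$, $\textsc{IsANodeCorrect}$) can be made to err with probability at most $0.1$. Each of these procedures, as defined, calls $\textsc{Compare}$ a constant number of times (one, one, and two calls respectively). Since each call errs with probability at most $p<0.5-\delta$ for a constant $\delta$, a Chernoff-type majority vote over a constant number $O(1/\delta^2)$ of repetitions reduces the per-invocation error to $0.1$, at a constant multiplicative overhead. This is exactly the situation covered by the hypothesis of Theorem \ref{th:wt-compl}, so applying it with $h(G)=O(\log n)$ gives $s=O(\log n+\log(1/\varepsilon))=O(\log(n/\varepsilon))$ steps, each of constant cost, and $\textsc{ProcessANode}$ is trivial. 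The search therefore takes $O(\log(n/\varepsilon))$ time with overall error at most $\varepsilon$.

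Next I would address add and remove. As stated in the preamble to the theorem, these operations decompose into (i) a search, (ii) a constant-time structural modification at the target node, and (iii) a re-balancing phase that uses only pointer rewiring and color changes, and therefore invokes no comparisons. Phase (i) costs $O(\log(n/\varepsilon))$ by the previous paragraph and is the only source of error. Phases (ii) and (iii) are deterministic and cost $O(\log n)$ in a Red-Black tree. Summing, we obtain $O(\log(n/\varepsilon))+O(\log n)=O(\log(n/\varepsilon))$ running time with error probability $\varepsilon$.

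The main thing that needs care is verifying the semantic contract of $\textsc{IsANodeCorrect}$, namely that it returns True precisely on the nodes that the errorless search would traverse. By construction of $\beta,\gamma$, a node $v$ lies on the search path for $x$ iff every ancestor-level decision was consistent with $x$, which is equivalent to $\beta(v)<x<\gamma(v)$ (the strict inequalities correctly reflect that an errorless search would have halted at any ancestor whose key equals $x$). A short case analysis on whether $v$ is a left- or right-descendant of the offending ancestor confirms that any node reached by an erroneous turn violates one of the two bounds, so $\textsc{IsANodeCorrect}$ correctly flags it as incorrect. With this semantic check in place, all preconditions of Theorem \ref{th:wt-compl} are satisfied and the stated complexity bound follows.
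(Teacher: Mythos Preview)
Your proposal is correct and follows essentially the same route as the paper: the paper simply states that the theorem ``directly follows from Theorem~\ref{th:wt-compl}'' after defining the five procedures and noting that add/remove decompose into search plus comparison-free re-balancing. Your write-up is more explicit (boosting each navigation call to error $0.1$, instantiating $h=O(\log n)$, and verifying the semantics of $\textsc{IsANodeCorrect}$ via the $\beta,\gamma$ invariants), but these are exactly the details the paper leaves implicit rather than a different argument.
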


If $\varepsilon=1/Poly(n)$, then the ``noisy'' setting does not affect asymptotic complexity.
\begin{corollary}\label{cr:bst-compl}
Suppose the comparing function for elements of Red-Black Tree has an error $p<0.5-\delta$ for some $0<\delta<0.5$.
Then, using the walking tree, we can do searching, adding, and removing operations with $O(\log n)$ running time and an error probability $1/Poly(n)$.
\end{corollary}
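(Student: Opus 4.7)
The plan is to obtain the corollary as a direct specialization of Theorem \ref{th:bst-compl}, simply by substituting a polynomially small error target into the complexity bound. Concretely, I would fix the meaning of $1/Poly(n)$ to mean $1/n^c$ for some constant $c>0$ chosen a priori (the constant may depend on the application but not on $n$), and then invoke Theorem \ref{th:bst-compl} with $\varepsilon = 1/n^c$.

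The only thing to verify is the simplification of the running time bound. With $\varepsilon = 1/n^c$ the expression $O(\log(n/\varepsilon))$ becomes $O(\log(n \cdot n^c)) = O((c+1)\log n)$, which is $O(\log n)$ because $c$ is a constant independent of $n$. Since the hypothesis on the comparison function ($p<0.5-\delta$) is identical in the two statements and Theorem \ref{th:bst-compl} already asserts that the same bound governs searching, adding and removing (recall that add and remove reduce to a search followed by a noise-free local rotation/re-balancing step), no further argument is needed.

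I do not expect any real obstacle here: the corollary is essentially a change of notation. The only point worth stating explicitly in the write-up is that polynomial factors inside the logarithm collapse into constant multiplicative factors outside, so the asymptotic complexity is unaffected by reducing the error probability from a constant to any inverse polynomial in $n$.
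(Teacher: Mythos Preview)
Your proposal is correct and matches the paper's approach exactly: the paper treats the corollary as an immediate consequence of Theorem~\ref{th:bst-compl} by taking $\varepsilon = 1/Poly(n)$, so that $O(\log(n/\varepsilon)) = O(\log n)$, and provides no separate proof.
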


\subsection{Noisy Segment Tree}\label{sec:st}
We consider a standard segment tree data structure \cite{momm2008,l2017guide} for an array $b=(b_1,\dots, b_n)$ for some integer $n>0$. 
The segment tree is a full binary tree such that each node corresponds to a segment of the array $b$. If a node $v$ corresponds to a segment $(b_{left},\dots, b_{right})$, then we store a value $\alpha(v)$ that represents some information about the segment. Let us consider a function $f$ such that $\alpha(v)=f(b_{left},\dots, b_{right})$. A segment of a node is the union of segments that correspond to its two children. Typically, the children correspond to segments $(b_{left},\dots, b_{mid})$ and $(b_{mid+1},\dots, b_{right})$, for $mid = \lfloor(left+right)/2\rfloor$. We consider  $\alpha(v)$ such that it can be computed by values of two children $\alpha(v_l)$ and $\alpha(v_r)$, where $v_l$ and $v_r$ are left and right children of $v$. Leaves correspond to single elements of the array $b$. As an example, we can consider integer values $b_i$ and sum $\alpha(v)=b_{left}+\dots+b_{right}$ as the value in a vertex $v$ and a corresponding segment $(b_{left},\dots, b_{right})$.
The data structure allows us to invoke the following requests in $O(\log n)$ running time.

%\begin{itemize}
 %   \item 
    {\bf Update.} Parameters are an index $i$ and an element $x$ ($1\leq i\leq n$). The procedure assigns $b_i\gets x$. For this goal, we assign $x$ for the leaf $w$ that corresponds to the $b_i$ and update ancestors of $w$.
 
%    \item  
{\bf Request.}  Parameters are two indexes $i$ and $j$ ($1\leq i\leq j \leq  n$), the procedure computes $f(b_i,\dots,b_j)$.
%\end{itemize}

The main part of both operations is the following. For the given root node and an index $i$, we should find the leaf node corresponding to $b_i$. The main step is the following. If we are in a node $v$ with associated segment $(b_{left},\dots, b_{right})$, then we compare $i$ with a middle element $mid = \lfloor(left+right)/2\rfloor$ and choose the left or the right child.
Assume that we have a comparing function for indexes $\textsc{Compare}(a,b)$ that returns 
%\begin{itemize}
%    \item 
$-1$ if $a<b$;
%    \item  
$+1$ if $a>b$;
%    \item 
 $0$ if $a=b$.
%\end{itemize}
The comparing function returns the answer with an error $p<0.5-\delta$ for some $0<\delta<0.5$.

Let us present each of the required procedures for searching the leaf with index $i$ in a segment tree $G$.
%
%\begin{itemize}
%    \item 
$\textsc{GetTheRoot}(G)$ returns the root node of the segment tree $G$.
%
%    \item 
For $mid = \lfloor(left+right)/2\rfloor$, and the segment $(b_{left},\dots, b_{right})$ associated with a node $v$, the function $\textsc{SelectChild}(v)$ returns the left child of $v$ if $Compare(mid,i)\leq 0$; and returns the right child othewrwise.
%
%    \item 
$\textsc{IsItTheTarget}(v)$ returns $True$ if $left=i=right$, formally, $Compare(left,i)=0$ and $Compare(right,i)=0$; and returns $False$ otherwise. Here  the segment $(b_{left},\dots, b_{right})$ is associated with $v$.
%
%    \item 
$\textsc{ProcessANode}(v)$ recomputes $\alpha(v)=f(b_{left},\dots, b_{right})$ according to the values of $\alpha$ in the left and the right children.
%
%    \item 
$\textsc{IsANodeCorrect}(v)$  returns $True$ if $left\leq  i \leq right$, formally, $\textsc{Compare}(left,i)\leq 0$ and $\textsc{Compare}(x,right)\leq 0$; and returns $False$ otherwise. Here the segment $(b_{left},\dots, b_{right})$ is associated with $v$.
%\end{itemize}
%
The presented operations satisfy all requirements. Let us present the complexity result that directly follows from Theorem \ref{th:wt-compl}.

\begin{theorem}\label{th:st-compl}
Suppose, the comparing function for indexes of a segment tree is noisy and has an error $p<0.5-\delta$ for some $0<\delta<0.5$. Then, using the walking tree, we can do update and request operations with $O(\log(n/\varepsilon))$ running time and an error probability $\varepsilon$.
\end{theorem}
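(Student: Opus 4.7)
The plan is to check that the five procedures $\textsc{GetTheRoot}$, $\textsc{SelectChild}$, $\textsc{IsItTheTarget}$, $\textsc{ProcessANode}$ and $\textsc{IsANodeCorrect}$ defined above meet the interface required by the walking tree of Section~\ref{sec:walking-tree}, and then invoke Theorem~\ref{th:wt-compl} as a black box. First I would observe that the segment tree $G$ is full binary with height $h(G)=O(\log n)$, so the $h$ appearing in Theorem~\ref{th:wt-compl} is $O(\log n)$. The target node for both Update and Request is a specific leaf (the leaf holding $b_i$ for Update, and the two leaves holding $b_i$ and $b_j$ for Request); Algorithm~\ref{alg:maincomand} then corresponds exactly to the standard root-to-leaf descent that the classical segment tree uses.

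Next I would verify the correctness of the three navigation predicates in the noiseless model. For $\textsc{IsANodeCorrect}(v)$, the node $v$ is on the correct root-to-leaf path iff $i$ lies in its segment $[left,right]$, which is precisely the condition $\textsc{Compare}(left,i)\le 0\wedge \textsc{Compare}(i,right)\le 0$. For $\textsc{IsItTheTarget}(v)$, the leaf corresponding to $b_i$ is exactly the (unique) node with $left=i=right$. For $\textsc{SelectChild}(v)$, the left child owns the segment $[left,mid]$ and the right child owns $[mid+1,right]$, so the comparison of $i$ with $mid$ picks the correct child. Each predicate uses at most two invocations of $\textsc{Compare}$, so each call has a constant number of noisy queries with individual error $p<0.5-\delta$; the walking tree already amplifies every navigation call to error $\le 0.1$ by a constant number of repetitions, so the assumptions of Theorem~\ref{th:wt-compl} are satisfied with $T=O(1)$.

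I would then apply Theorem~\ref{th:wt-compl} directly: Algorithm~\ref{alg:walking} reaches the target leaf and replays the root-to-leaf path in $s=O(h+\log(1/\varepsilon))=O(\log(n/\varepsilon))$ steps, with overall error at most $\varepsilon$. For Update, after reaching the leaf corresponding to $b_i$ we walk back up the recorded path and call $\textsc{ProcessANode}$ on each ancestor, which recomputes $\alpha$ from its two children in $O(1)$; this recomputation uses only local values stored in $G$ and involves no further noisy comparisons, so it adds at most another $O(\log n)$ deterministic work and does not inflate the error. For Request on $[i,j]$, I would reduce to two walking tree descents, one to the leaf $b_i$ and one to $b_j$, and then, on the returned paths, aggregate the standard segment tree contributions from the subtrees hanging off the path between the two leaves; combining two independent walks of error $\varepsilon/2$ each gives total error $\varepsilon$ by a union bound and keeps the running time at $O(\log(n/\varepsilon))$.

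The only subtle step, and the one I would spend most care on, is the Request operation: unlike Update, its classical realization is not a single root-to-target descent but a traversal that may branch. I would handle this by explicitly recasting it as the composition of two root-to-leaf descents (for $b_i$ and $b_j$) plus a deterministic $O(\log n)$ aggregation along the two recorded paths, so that each descent fits the $\textsc{GetTheRoot}/\textsc{SelectChild}/\textsc{IsItTheTarget}$ template of Algorithm~\ref{alg:maincomand} and Theorem~\ref{th:wt-compl} applies unchanged. With that in place, both operations inherit the $O(\log(n/\varepsilon))$ bound with total error $\varepsilon$. \qed
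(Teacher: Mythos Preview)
Your proposal is correct and follows exactly the route the paper takes: verify that the five procedures instantiate the walking-tree interface, note $h(G)=O(\log n)$, and invoke Theorem~\ref{th:wt-compl} as a black box. In fact the paper's own proof is a single sentence (``directly follows from Theorem~\ref{th:wt-compl}''), so your explicit handling of the Request operation via two independent root-to-leaf walks plus a deterministic aggregation along the recorded paths is more careful than what the paper spells out, but entirely in the same spirit.
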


If we take $\varepsilon=1/Poly(n)$, then the ``noisy'' setting does not affect asymptotic complexity.
\begin{corollary}\label{cr:st-compl}
Suppose, the comparing function for indexes of a segment tree is noisy and has an error $p<0.5-\delta$ for some $0<\delta<0.5$.
Then, using the walking tree, we can do update and request operations with $O(\log n)$ running time and an error probability $1/Poly(n)$
\end{corollary}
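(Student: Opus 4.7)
The plan is to derive the corollary as an immediate specialization of Theorem \ref{th:st-compl}. The theorem already gives, for any target error probability $\varepsilon \in (0, 0.5)$, a walking-tree-based implementation of update and request operations on the segment tree whose running time is $O(\log(n/\varepsilon))$ provided the index comparator has one-sided error bounded away from $1/2$ by a constant $\delta$. The only thing left to do is to instantiate $\varepsilon$ appropriately.

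Concretely, I would proceed as follows. First, fix the notation by writing $\varepsilon = 1/n^c$ for some constant $c > 0$, which is what is meant by $1/\mathrm{Poly}(n)$. Second, plug this into the expression $\log(n/\varepsilon)$: we get $\log(n/\varepsilon) = \log(n \cdot n^c) = (c+1)\log n$, so in particular $\log(n/\varepsilon) = O(\log n)$ since $c$ is a fixed constant. Third, apply Theorem \ref{th:st-compl} with this value of $\varepsilon$: the hypotheses on the comparator are unchanged, so the theorem gives the walking-tree segment-tree implementation with running time $O(\log(n/\varepsilon)) = O(\log n)$ and error probability $\varepsilon = 1/\mathrm{Poly}(n)$, which is exactly the statement of the corollary.

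There is essentially no obstacle here: the corollary is a direct reparameterization of the theorem and requires no new argument, only the observation that $\log$ turns polynomial factors into constants. The only subtlety worth flagging explicitly is that the constant hidden in the $O(\log n)$ bound depends linearly on the polynomial exponent $c$ (and on $\delta$, through the constant hidden in Theorem \ref{th:st-compl}), but for any fixed polynomial error target this is still $O(\log n)$, matching the classical noiseless complexity of segment tree operations. I would write the proof as two or three sentences: set $\varepsilon = 1/n^{c}$, simplify $\log(n/\varepsilon)$, invoke Theorem \ref{th:st-compl}.
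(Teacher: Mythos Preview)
Your proposal is correct and matches the paper's approach exactly: the paper states the corollary immediately after Theorem~\ref{th:st-compl} with only the remark ``If we take $\varepsilon=1/Poly(n)$, then the `noisy' setting does not affect asymptotic complexity,'' and your argument---set $\varepsilon = 1/n^c$, observe $\log(n/\varepsilon) = (c+1)\log n = O(\log n)$, invoke the theorem---is precisely the intended one-line derivation.
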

\subsection{Analysis, Discussion, Modifications}
There are different additional operations with a segment tree. One such example is the segment tree with range updates. In this modification, we can update the values $b_i$ for a range $i\in\{\ell,\dots,r\}$ by a value in one request. The reader can find more information in \cite{l2017guide} and examples of applications in \cite{kr2021a,kr2021b}. The main operation with a noisy comparing procedure is the same. So, we can still use the same idea for such modifications of the segment tree.

\begin{remark}\label{rm:st-memory}
If the segment tree is constructed for an array $b_1,\dots,b_n$, then we can extend it to $b_1,\dots,b_n,\dots, b_k$, where $k=2^{\lceil\log_2 n\rceil}$ that is closest to $n$ power of $2$ and $b_{n+1},\dots,b_{k}$ are neutral element for the function $f$.
If we have a vertex $v$ and two borders $left$ and $right$ of the segment associated with $v$, then we always can compute the segments for the left and the right children that are $b_{left},\dots,b_{mid}$ and $b_{mid+1},\dots,b_{right}$ for $mid=(left+right)/2$. Additionally, we can compute the segment for the parent that is $b_{left},\dots,b_{pright}$, where $pright=2\cdot right - left$ if the node is the left child of its parent. If the node is the right child of its parent, then the parent's segment is $b_{pleft},\dots,b_{right}$, where $pleft=2\cdot left - right$. Therefore, we should not store the borders of a segment in a node, and we can compute them during the walk on the segment tree. Additionally, we should not construct the walking tree. We can keep it in mind and walk by the segment tree itself using only three variables: the $left$ and $right$ borders of the current segment and a counter if required.
\end{remark}

If we have access to the full segment tree, including leaves, then we can do operations without the walking tree. We can use the noisy binary search algorithm \cite{frpu94} for searching the leaf that corresponds to the index $i$, and then process all the ancestors of the leaf.
There are at least two useful scenarios for a noisy segment tree.

1.
We have access only to the root and have no direct access to leaves.

2. The second one is the compressed segment tree. If initially, all elements of the array $b$ are empty or neutral for the function $f$, then we can compress a  subtree with one node with a label of a segment with empty elements. On each step, we do not store any subtree if it has only neutral elements. In that case, we store only a root of this tree and mark it as a subtree with neutral elements. It is reasonable if $n$ is very big and storing the whole tree is very expensive. In that case, we can replace the noisy binary tree with the noisy self-balanced search tree from the previous section. The search tree stores the updated elements in leaves, and we can search the required index in this data structure. At the same time, the noisy segment tree uses much less memory with respect to the Remark \ref{rm:st-memory}. That is why a noisy segment tree is more effective in this case too.

%\subsection{Noisy Trie (Prefix Tree)}

%%%%%%%%%%%%%%%%%%%%%%%%%%%%%%%%%%%%%%%%%
%%%            Sort
%%%%%%%%%%%%%%%%%%%%%%%%%%%%%%%%%%%%%%%%%
\section{Quantum Sort Algorithm for Strings}\label{sec:sort}
As one of the interesting applications, we suggest applications from quantum computing \cite{nc2010,a2017,aazksw2019part1,k2022lecturenotes}.
As objects with noisy comparing, we use strings. There is an algorithm for two strings $t$ and $s$ that compares them in $O\left(\sqrt{\min(|s|,|t|)}\right)$ running time 
\cite{kiv2022}%\cite{ki2019,kiv2022,kkmsy2022,kl2020,l2020}
, where  $|s|$ and $|t|$ are lengths of $s$ and $t$ respectively. The algorithm is based on modifications %\cite{k2014,ll2015,ll2016,kkmsy2022}
\cite{k2014,kkmsy2022}
of Grover's search algorithm \cite{g96,bbht98}. The result is the following

%\begin{lemma}[\cite{ki2019,kiv2022,kkmsy2022,kl2020,l2020}]\label{lm:strcmp}
\begin{lemma}[\cite{kiv2022}]\label{lm:strcmp}
There is a quantum algorithm that compares two strings $s$ and $t$ of lengths $|s|$ and $|t|$ in the lexicographical order with  $O(\sqrt{\min(|s|,|t|)}\log \xi^{-1})$ running time and error probability $O(\xi)$ for $0<\xi<1$.
\end{lemma}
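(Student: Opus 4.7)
The plan is to reduce lexicographic comparison to finding the first position where the two strings disagree, and then to apply a Grover-based first-marked-element primitive together with standard probability amplification.

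First I would observe the following elementary characterization. Set $n:=\min(|s|,|t|)$. Then $s<t$ in lex order if and only if either (a) there is a smallest index $i\in\{1,\dots,n\}$ with $s[i]\neq t[i]$ and moreover $s[i]<t[i]$, or (b) no such index exists and $|s|<|t|$. The cases $s=t$ and $s>t$ are symmetric. Hence the whole task reduces to (i) finding the smallest $i\in\{1,\dots,n\}$ at which the two strings disagree, if such an $i$ exists, and (ii) comparing either the two characters at that position, or the two lengths, depending on the outcome of (i).

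Step (i) is the quantum first-marked-element problem on an array of length $n$, with the marking predicate $s[i]\neq t[i]$ evaluated by one query to each string plus a constant-size classical circuit. Modifications of Grover's search of the kind cited in \cite{k2014,kkmsy2022} (essentially a minimum-finding-style iterated search that repeatedly shrinks an upper bound on the best candidate found so far) solve this in $O(\sqrt{n})$ quantum queries with success probability at least, say, $2/3$. Step (ii) is free once the index is known, and costs only one more query to each string.

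Composing these pieces gives a full comparison subroutine with running time $O(\sqrt{n})$ and constant error probability. To drive the error down to $O(\xi)$, I would run this subroutine independently $k=\Theta(\log\xi^{-1})$ times and output the majority among the three possible outcomes ($<$, $=$, $>$). A standard Chernoff bound shows that the majority is correct except with probability $O(\xi)$, giving total running time $O(\sqrt{n}\log\xi^{-1})=O(\sqrt{\min(|s|,|t|)}\log\xi^{-1})$. The main obstacle is really the first-marked-element subroutine itself: plain Grover search only returns \emph{some} disagreeing index, and a comparison of the characters at an arbitrary such position is not consistent with the lex order. Obtaining the \emph{smallest} disagreeing index in $O(\sqrt{n})$ queries, rather than the naive $O(\sqrt{n}\log n)$ from binary search plus Grover, is the nontrivial ingredient I would pull in from \cite{k2014,kkmsy2022}.
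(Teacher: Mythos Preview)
Your reduction is exactly the right one, and the sketch is sound: lexicographic comparison reduces to the first index of disagreement among the first $\min(|s|,|t|)$ positions, the first-marked-element primitive from \cite{k2014,kkmsy2022} gives this in $O(\sqrt{\min(|s|,|t|)})$ queries with constant error, and majority voting over $\Theta(\log\xi^{-1})$ independent runs yields the stated bound.

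Note, however, that the paper does not prove this lemma at all: it is quoted from \cite{kiv2022} (with the underlying Grover variants attributed to \cite{k2014,kkmsy2022}) and used as a black box. So there is no ``paper's own proof'' to compare against; your argument is essentially a reconstruction of what the cited reference does, and it matches the approach the paper alludes to in the sentence preceding the lemma.
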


Let us take $\xi=0.1$ and use the procedure as a string comparing procedure. We consider an application of the noisy search tree in this  and in the next sections. Let us discuss the Strings Sorting problem as one of the applications.
\paragraph{Problem.} There are $n$ strings $s^0,\dots,s^{n-1}$ of size $l$ for some positive integers $n$ and $l$. The problem is to find a permutation $(j_0,\dots,j_{n-1})$ such that $s^{j_i}< s^{j_{i+1}}$, or $s^{j_i}=s^{j_{i+1}}$ and $j_i<j_{i+1}$ for each $i\in\{0,\dots,n-2\}$. 

The Quantum sorting algorithm for strings was presented in \cite{ki2019,kiv2022}. The running time of the algorithm is $O(n\sqrt{l}\log n)$.
We can present the algorithm with the same complexity but in a simpler way. Assume that we have a noisy self-balanced binary search tree with strings as keys and quantum comparing procedure from Lemma \ref{lm:strcmp}. We assume that the comparing procedure compares indexes in a case of equal strings.
In fact, we store indexes of the strings in nodes.  We assume that if a node stores a key index $i$, then any node from the left subtree has a key index $j$ such that $s^j<s^i$ or ($s^j=s^i$ and $j<i$); and any node from the right subtree has a key index $j'$ such that $s^{j'}>s^i$ or ($s^j=s^i$ and $j'>i$). 
Initially, the tree is empty. Let $\textsc{Add}(i)$ be a function that adds a string $s^i$ to the tree. Let $\textsc{GetMin}$ be a function that returns the index of the minimal string $s$ from the tree according to the comparing procedure. After returning the index, the function removes it from the tree.
The final algorithm is presented as Algorithm \ref{alg:sortstring} and its complexity in Theorem \ref{th:upper-sort}.
\begin{algorithm}[ht]
    \caption{Quantum string sorting algorithm} \label{alg:sortstring}
    \begin{algorithmic}
        \For{$i\in\{0,\dots,n-1\}$}
            \State $\textsc{Add}(i)$
        \EndFor
        \For{$i\in\{0,\dots,n-1\}$}
            \State $j_i\gets \textsc{GetMin}$
        \EndFor
    \end{algorithmic}
\end{algorithm}
The second For-loop can be replaced by inorder traversal (dfs) of the tree for constructing the list. This approach has smaller hidden constant in big-O. The full idea is presented in Appendix \ref{apx:sort-with-dfs} for completeness. 
\begin{theorem}\label{th:upper-sort}
  The quantum running time for sorting $n$ string of size $l$ is $O(n\sqrt{l}\log n)$ and $\Omega(n\sqrt{l}/\log n)$.
\end{theorem}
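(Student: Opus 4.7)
The idea is to plug Algorithm \ref{alg:sortstring} into the noisy red-black tree of Theorem \ref{th:bst-compl}, using Lemma \ref{lm:strcmp} with $\xi=0.1$ as the noisy \textsc{Compare} primitive. Each comparator call then costs $O(\sqrt{\min(|s|,|t|)})=O(\sqrt{l})$ queries and is correct with probability at least $0.9>0.5+\delta$ for a suitable constant $\delta$, so the noise hypothesis of the tree is met. Setting $\varepsilon=1/n^3$ and invoking Corollary \ref{cr:bst-compl}, each \textsc{Add} executes $O(\log n)$ noisy comparisons, i.e.\ $O(\sqrt{l}\log n)$ queries. Summing over the $n$ insertions and replacing the $n$ \textsc{GetMin} calls by a single in-order traversal of the tree (as noted after Algorithm \ref{alg:sortstring}, and elaborated in the appendix) gives a total of $O(n\sqrt{l}\log n)$ queries, and a union bound over the $2n$ tree operations keeps the overall failure probability at $O(1/n^2)$.

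\textbf{Plan for the lower bound.} This is the main obstacle. I plan a reduction from the direct product of $n$ independent OR instances on $\Theta(l)$ bits each. For each instance $i\in\{0,\dots,n-1\}$ with input $x^{(i)}\in\{0,1\}^{l'}$, where $l'=l-\lceil\log_2 n\rceil$, I construct two strings $s^{2i},s^{2i+1}\in\{0,1\}^l$, both beginning with the binary encoding of $i$ (so that different pairs are separated in the lexicographic order and never interact), with suffixes chosen so that $s^{2i}=s^{2i+1}$ when $\bigvee_j x^{(i)}_j=0$ and $s^{2i+1}<s^{2i}$ strictly when $\bigvee_j x^{(i)}_j=1$. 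Then the relative order of the pair $\{2i,2i+1\}$ in the output permutation (tie-broken by index) reveals the OR of $x^{(i)}$, so any algorithm that sorts the $2n$ strings of length $l$ simultaneously solves all $n$ OR instances. By the strong quantum direct product theorem for OR, solving $n$ independent ORs on $l'$ bits with constant success probability requires $\Omega(n\sqrt{l'})$ queries to the input bits, which are exactly the bits of the strings accessed by the sorting oracle. Renaming $2n\mapsto n$ and observing that $l'=\Omega(l)$ whenever $l=\Omega(\log n)$ (the regime where the bound is non-trivial) yields $\Omega(n\sqrt{l})$; the $1/\log n$ slack in the statement is the price paid for the $O(\log n)$-bit label prefix and any loss incurred by the direct product step.

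\textbf{Anticipated difficulty.} The only subtle step is the direct product invocation: I need to make sure that (i) a quantum sorting algorithm for this specially-padded family really induces, with the same success probability and query cost, an algorithm that outputs all $n$ OR answers, and (ii) the form of direct-product theorem that is cited is strong enough to convert bounded-error sorting into bounded-error parallel OR without losing more than a $\log n$ factor. Everything else---the padding, the sign convention that makes the sorted order encode the OR bit, and the accounting for the $\lceil\log_2 n\rceil$-bit prefix---is routine bookkeeping.
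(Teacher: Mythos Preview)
Your upper-bound argument is exactly the paper's: insert the $n$ strings into the noisy red-black tree using the quantum comparator of Lemma~\ref{lm:strcmp} with constant error, each insertion costing $O(\sqrt{l}\log n)$ queries by Corollary~\ref{cr:bst-compl}, then read off the sorted order by an in-order traversal.

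Your lower-bound route is \emph{different} from the paper's, and in fact cleaner. The paper does not reduce from a direct product of ORs. Instead it (i) observes that sorting is at least as hard as computing the \emph{sign} of the sorting permutation ($\SIGNUM_{n,l}$), (ii) restricts the input to a product of $l/\log n$ ``bands'' of width $\log n$, each band either all-zero or a permutation of $\{0,\dots,n-1\}$, (iii) writes $\SIGNUM_{n,l}=\FST_{l/\log n}\circ(\SG_n,\dots,\SG_n)$ on such inputs, and (iv) applies the adversary \emph{composition} theorem (Corollary~\ref{thm:adv-compose}) together with $\ADV(\FST_m)=\Omega(\sqrt{m})$ and $\ADV(\SG_n)=\Omega(n)$ to obtain $\Omega\!\big(n\sqrt{l/\log n}\big)$. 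The $\sqrt{\log n}$ loss comes from the band width, not from any direct-product slack.

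Your construction---pair $(s^{2i},s^{2i+1})$ sharing an $\lceil\log_2 n\rceil$-bit index prefix, with one suffix equal to $x^{(i)}$ and the other all-zero so that the tie-broken relative order of $2i$ and $2i+1$ in the output encodes $\mathrm{OR}(x^{(i)})$---is a valid query-model reduction: every oracle call the sorter makes to an unknown bit is a call to some $x^{(i)}_j$, and a correct sorted permutation yields all $n$ OR answers simultaneously. Invoking the strong direct product theorem for OR (Klauck--\v{S}palek--de Wolf) then gives $\Omega\!\big(n\sqrt{l-\lceil\log_2 n\rceil}\big)=\Omega(n\sqrt{l})$ whenever $l\ge 2\log_2 n$. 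So your method actually avoids the logarithmic loss altogether; your remark that ``the $1/\log n$ slack \dots\ is the price paid for the prefix and the direct product step'' undersells your own argument---the prefix costs only an additive $\log n$ in $l$, and the direct product theorem loses nothing. Compared to the paper, you trade the adversary-composition machinery for the (equally black-box) strong direct product theorem, and you get a tighter bound; the paper's route, on the other hand, stays entirely inside the adversary framework and does not need the separate direct-product result.
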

The upper bound is complexity of the presented algorithm and algorithm from \cite{kiv2022}. The proof of the lower bound is presented in the next section. A reader can see that the difference between upper and lower bounds is just $(\log n)^{1.5}$. It shows that the presented algorithm is almost optimal.

\subsection{Lower Bound}\label{sub:lower-bound}
For simplicity we assume that strings are binary, i.e. $s^i\in\{0,1\}^l$, for $i\in\{0,\dots,n-1\}$.
Let us formally define the sorting function.

  For  positive integers $n,l$,
 let \(\RADIX_{n,l}: \{0,1\}^{nl} \to \mathbb{S}_{n}\) be a function that gets \(n\) binary
  strings of length \(l\) as input and returns a permutation of \(n\)
  integers that is a result of sorting input strings. Here $\mathbb{S}_{n}$ is a set of all permutations of integers from $0$ to $n-1$.
  For \(s^{0}, \ldots, s^{n-1} \in \{0,1\}^{l}\), we have  $
    \RADIX_{n,l}(s^{0}, \ldots, s^{n-1}) = (j_{0}, \ldots, j_{n-1}),
  $
  where \((j_{0}, \ldots, j_{n-1}) \in \mathbb{S}_{n}\) is a permutation such that 
  \(\sigma_{j_{i}} < \sigma_{{j_{i+1}}}\) or (\(\sigma_{j_{i}} = \sigma_{j_{i+1}}\) and \(j_{i} < j_{i+1}\)),  for \(i \in \{0, \ldots, n-2\}\). 
  
Note that in the case of \(l = 1\), the function \(\RADIX_{n,1}\) can be used to
compute the majority function. We use \(\RADIX_{n,1}\) to sort strings and the
\(n/2\)-th string is a value of the majority function. Therefore, we expect that
complexity of \(\RADIX_{n,1}\) should be \(\Omega(n)\) \cite{BHCMW98}.
In the case of \(n = 2\), the function \(\RADIX_{2,l}\) is similar to the OR function,
so we expect that it requires \(\Omega(\sqrt{l})\) queries \cite{BHCMW98}. The formal proof for relations between these functions is presented in Appendix \ref{apx:radix-maj-firstone}.

%We use \([n]\) notation for the set \(\{0, 1, \ldots, n-1\}\).

%
%We denote \(\mathbb{S}_{n}\) the set of all permutations of numbers from \([n]\).

%Each permutation \(\sigma \in \mathbb{S}_{n}\) can be represented as a product of \(m\) transpositions (cycles with only
%two elements). Such decomposition is not unique, but the parity of the number of transpositions in all decompositions is
%the same. Let us define the sign of the permutation as
%\[
%  \sgn(\sigma) =
%  \begin{cases}
%    0, \text{ if } m \text{ is even},\\
%    1, \text{ if } m \text{ is odd.}
%  \end{cases}
%\]
%Note that usually the sign of the permutation is defined to be \(\sgn' : \mathbb{S}_{n} \to \{-1, +1\}\) (cf.
%\cite{Princeton}), but these definitions are related by \(\sgn'(\sigma) = (-1)^{\sgn(\sigma)}\).

We denote \(\|M\|\) the spectral norm of a matrix \(M\), and \(A \circ B\)
denotes the Hadamard product of matrices \(A\) and \(B\):
\((A \circ B)(\sigma, \tau) = A(\sigma, \tau) B(\sigma, \tau)\).

%\subsubsection{Adversary Bound}

We prove a lower bound for \(\RADIX\) using Adversary method \cite{hls2007}.

% theorem about ADV
\begin{theorem}[Adversary bound, \cite{hls2007}]\label{th:adv-lowerbound}
  Let \(f : \{0,1\}^{n} \to \Sigma_{O}\) be an arbitrary function, where
  \(\Sigma_{O}\) is a set of outputs.
  Let \(A\) be an arbitrary matrix with rows and columns indexed by input
  strings, such that \(A(\sigma, \tau) = 0\) if \(f(\sigma) = f(\tau)\).
  Let \(D_{i}\) be a zero-one matrix with rows and columns indexed by input
  strings, such that
  \(D_{i}(\sigma, \tau) = 1\) iff these strings differ in \(i\)-th symbol: \(\sigma_{i} \neq \tau_{i}\).
  Let us denote
%  \[
$
    \ADV(f) = \max_{A \ge 0} \frac{\| A \|}{\max_{i} \|A \circ D_{i}\|},
$
 % \]
  where the maximum is taken over all matrices \(A\) with non-negative entries.
  Then the two-sided \(\epsilon\)-bounded error quantum query complexity
  \(Q(f)\) is lower bounded by
%  \[
 $   Q(f) \ge \frac{1-2\sqrt{\epsilon(1-\epsilon)}}{2} \ADV(f).$
%  \]
\end{theorem}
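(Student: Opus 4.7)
The plan is to use the standard \emph{Gram matrix progress argument} of Høyer, Lee, and Špalek. Put the algorithm in canonical form, alternating input-independent unitaries $U_0, U_1, \ldots, U_T$ with oracle queries $O_\sigma$, and let $\ket{\psi_\sigma^t}$ denote the algorithm's state on input $\sigma$ after $t$ queries. Fix a non-negative principal eigenvector $\delta$ of $A$, normalized so that $\|\delta\|_2 = 1$ and $A\delta = \|A\|\delta$, and introduce the progress potential
$$W^t \;=\; \sum_{\sigma,\tau} \delta_\sigma\, \delta_\tau\, A(\sigma,\tau)\, \langle \psi_\sigma^t \,|\, \psi_\tau^t \rangle.$$
Intuitively $W^t$ measures how ``distinguishable'' the states have become across exactly those input pairs that $A$ weights; the proof bounds $W^t$ at the two endpoints and controls how much a single query can shift it.

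First I would compute the endpoints. Since the initial state is input-independent, $\langle \psi_\sigma^0 \,|\, \psi_\tau^0\rangle = 1$ for all $\sigma,\tau$, so $W^0 = \delta^{\mathsf T} A \delta = \|A\|$. For $W^T$, note that $A(\sigma,\tau) = 0$ whenever $f(\sigma) = f(\tau)$, so only pairs with distinguishable outputs contribute. For such pairs, the Fuchs--van de Graaf inequality relating trace distance and fidelity, applied to the final measurement distributions, gives $|\langle \psi_\sigma^T \,|\, \psi_\tau^T\rangle| \leq 2\sqrt{\epsilon(1-\epsilon)}$. A straightforward Cauchy--Schwarz / quadratic-form estimate with weights $\delta_\sigma$ then yields $|W^T| \leq 2\sqrt{\epsilon(1-\epsilon)}\,\|A\|$.

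The technical heart, and the step I expect to be the main obstacle, is the per-query change bound
$$\bigl|W^{t+1} - W^t\bigr| \;\leq\; 2\,\max_i \|A\circ D_i\|.$$
Input-independent unitaries act identically on every input, so they leave $W^t$ invariant; only the query contributes. Decomposing each state along the query register as $\ket{\psi_\sigma^t} = \sum_i \ket{i}\otimes \ket{\phi_{\sigma,i}^t}$ and using that the standard oracle flips signs on components with $\sigma_i = 1$, one computes
$$W^{t+1} - W^t \;=\; -2\sum_i \sum_{\sigma,\tau:\, \sigma_i\neq\tau_i} \delta_\sigma\, \delta_\tau\, A(\sigma,\tau)\, \langle \phi_{\sigma,i}^t \,|\, \phi_{\tau,i}^t\rangle.$$
For fixed $i$ the inner double sum is a bilinear form in the vectors $v_i(\sigma) = \delta_\sigma \ket{\phi_{\sigma,i}^t}$ whose matrix is $A\circ D_i$, hence bounded (via the spectral norm) by $\|A\circ D_i\|\cdot \sum_\sigma \delta_\sigma^2 \|\phi_{\sigma,i}^t\|^2$. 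Summing over $i$ uses $\sum_i \|\phi_{\sigma,i}^t\|^2 = 1$ and $\|\delta\|_2 = 1$ to absorb all weight, giving the claim.

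Combining the three estimates yields
$$\|A\|\bigl(1 - 2\sqrt{\epsilon(1-\epsilon)}\bigr) \;\leq\; |W^0 - W^T| \;\leq\; 2T\,\max_i \|A\circ D_i\|,$$
and solving for $T$, which is an arbitrary upper bound on the number of queries used by a valid bounded-error algorithm, gives $Q(f) \geq \tfrac{1-2\sqrt{\epsilon(1-\epsilon)}}{2}\cdot \tfrac{\|A\|}{\max_i \|A\circ D_i\|}$. Taking the supremum over admissible $A$ reproduces the stated lower bound via $\ADV(f)$.
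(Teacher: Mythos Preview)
The paper does not supply its own proof of this theorem; it is quoted verbatim as a known result from \cite{hls2007} and used as a black box for the lower-bound arguments in Section~\ref{sub:lower-bound}. Your sketch is the standard Gram-matrix progress proof from that reference and is correct as outlined, so there is nothing to compare against in the paper itself.
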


Bounds obtained by using
Adversary method can be composed.
It is a  nice property that we use.
\begin{corollary}[\cite{hls2007}]\label{thm:adv-compose}
  If \(h = f \circ (g_{1}, \ldots, g_{k})\), subfunctions \(g_{k}\) act on disjoint subsets of input, and every \(g_{k} = g\), then
%  \[
  $  \ADV(h) = \ADV(f) \ADV(g)$.
 % \]
\end{corollary}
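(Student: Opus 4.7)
The plan is to prove the two inequalities $\ADV(h) \geq \ADV(f)\ADV(g)$ and $\ADV(h) \leq \ADV(f)\ADV(g)$ separately, from which equality follows. Inputs to $h$ factor as $x = (x_1, \ldots, x_k)$ with each block $x_j$ fed into $g$, and $f$ applied to the resulting $k$-tuple $(g(x_1), \ldots, g(x_k))$. Throughout I would normalize the optimal adversary matrices $A^f$ and $A^g$ achieving $\ADV(f)$ and $\ADV(g)$ so that $\max_i \|A^f \circ D^f_i\| = \max_i \|A^g \circ D^g_i\| = 1$, which reduces checking the adversary ratio to bounding the numerator $\|A\|$.

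For the lower bound, I would build an explicit adversary matrix $A^h$ for $h$ out of $A^f$ and $A^g$. Since $A^g \geq 0$, Perron--Frobenius supplies a nonnegative principal eigenvector $u$ with $A^g u = \ADV(g)\, u$. For inputs $x, y$ of $h$ I would set
$$A^h(x, y) = A^f\bigl(g^k(x), g^k(y)\bigr) \cdot \prod_{j=1}^k W_j(x_j, y_j),$$
where $g^k(x) = (g(x_1), \ldots, g(x_k))$ and $W_j(x_j, y_j)$ equals $u(x_j)\, u(y_j)$ on blocks where $g(x_j) = g(y_j)$, and equals $A^g(x_j, y_j)\, u(x_j)\, u(y_j)/\ADV(g)$ on blocks where $g(x_j) \neq g(y_j)$. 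I would then verify $\|A^h\| \geq \ADV(f)\, \ADV(g)^k / \ADV(g)^{k-1} = \ADV(f)\ADV(g)$ on a test vector obtained by tensoring the Perron vector of $A^f$ with copies of $u$ (indexed over pre-images under $g$), and verify $\|A^h \circ D_i\| \leq 1$ for every input bit $i$ of $h$ by localizing the Hadamard mask to the affected block.

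For the upper bound $\ADV(h) \leq \ADV(f)\ADV(g)$, I would appeal to the dual SDP characterization of the positive adversary from \cite{hls2007}, in which feasible solutions admit a block structure matching the composition. Restricting an optimal dual witness for $h$ to the variables associated with a single block $j$ yields a feasible witness for $g$ of value at least $\ADV(g)$, while collapsing the block structure yields a feasible witness for $f$ of value at least $\ADV(f)$; multiplicativity of the dual objective across this decomposition then gives the matching upper bound.

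The main obstacle is the bound $\|A^h \circ D_i\| \leq 1$ in the lower-bound construction. After relabeling so that $i$ lies inside block $j$, the matrix $A^h \circ D_i$ splits into an outer factor governed by $A^f$ (acting on the other $k-1$ blocks and on the possibly changed $g$-value at block $j$) and an inner factor governed by $A^g \circ D^g_{i'}$ on block $j$. The delicate point is the boundary case where flipping bit $i$ also flips $g(x_j)$: the entries of $A^h$ on the two sides of this boundary are defined by different formulas, and the factor $u(x_j)\, u(y_j)$ in the same-output case is inserted precisely so that, together with the $\ADV(g)$ normalization in the different-output case, the spliced matrix respects the eigenrelation $A^g u = \ADV(g) u$. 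This is where the spectral norms multiply rather than merely combine additively, and it is the technical core of the argument.
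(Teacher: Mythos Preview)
The paper does not prove this statement at all: it is stated as a corollary with a citation to \cite{hls2007} and is used as a black box to combine Lemmas~\ref{lem:adv-fst} and~\ref{lem:sg-fst}. There is therefore no ``paper's own proof'' to compare your proposal against.

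That said, your sketch is a reasonable outline of how the composition theorem for the nonnegative adversary is established in the source you are reconstructing. The lower-bound direction via an explicit tensor-like adversary matrix built from Perron--Frobenius eigenvectors of $A^g$, and the upper-bound direction via the dual (minimization) formulation, are both the standard ingredients. Two cautions about the sketch as written. First, in your definition of $W_j$ on blocks with $g(x_j)\neq g(y_j)$ you multiply $A^g(x_j,y_j)$ by $u(x_j)u(y_j)/\ADV(g)$; in the usual construction the factor on differing blocks is just $A^g(x_j,y_j)/\ADV(g)$ (or one uses a normalized eigenvector so that the same-output block contributes a rank-one projector), and inserting the extra $u(x_j)u(y_j)$ there risks double-counting and spoils the test-vector calculation unless $u$ is absorbed into the test vector rather than into $A^h$. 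Second, the upper-bound paragraph is too vague to be a proof: ``restricting'' and ``collapsing'' an optimal dual witness for $h$ does not automatically yield feasible witnesses for $f$ and $g$, and the multiplicativity of the dual objective requires the specific structure of the span-program/filtered-norm formulation in \cite{hls2007}. If you intend to actually supply a proof rather than cite the result, you would need to spell out the dual feasibility carefully.
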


To obtain an upper bound on \(\|A \circ D_{i}\|\), the following lemma is used.
%\begin{lemma}[\cite{ss2005adv,ss2006adv}]\label{lem:circ-prod}
\begin{lemma}[\cite{ss2006adv}]\label{lem:circ-prod}
  Assume \(A\), \(B\) and \(C\) are real matrices such that \(A = B \circ C\). Then,
  %\[
    $\|A\| \le \max_{i,j : A(i,j)\neq 0} r_{i}(B)c_{j}(C)$,
  %\]
  where \(r_{i}(B)\) is the \(\ell_{2}\)-norm of the \(i\)-th row of \(B\), and \(c_{j}(C)\) is the \(\ell_{2}\)-norm of the \(j\)-th column of \(C\).
\end{lemma}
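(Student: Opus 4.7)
The plan is to derive the bound from the variational characterization $\|A\| = \max_{\|x\|=\|y\|=1} |x^{T} A y|$, combined with a weighted Cauchy--Schwarz inequality. The key feature is that the weights couple the row and column indices so that the final estimate depends only on products $r_{i}(B)\, c_{j}(C)$ over pairs $(i,j)$ with $A(i,j)\neq 0$, rather than the generally larger quantity $\max_{i} r_{i}(B)\cdot\max_{j} c_{j}(C)$ produced by a naive splitting.

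First I would reduce to the case where the supports of $B$ and $C$ both coincide with that of $A$: since $A(i,j)=0$ forces $B(i,j)C(i,j)=0$, we may zero out $B(i,j)$ and $C(i,j)$ on positions where $A$ vanishes without changing $A=B\circ C$, and rows or columns of $B,C$ that become entirely zero contribute nothing. Hence $r_{i}(B)>0$ and $c_{j}(C)>0$ for all indices that matter. Writing $x^{T} A y = \sum_{(i,j):\,A(i,j)\neq 0} B(i,j) C(i,j) x_{i} y_{j}$, I would then split each term as
\[
B(i,j) C(i,j) x_{i} y_{j} = \Bigl[B(i,j) x_{i} \sqrt{c_{j}(C)/r_{i}(B)}\Bigr]\cdot\Bigl[C(i,j) y_{j} \sqrt{r_{i}(B)/c_{j}(C)}\Bigr],
\]
and apply Cauchy--Schwarz over the index set $\{(i,j):A(i,j)\neq 0\}$ to obtain $|x^{T} A y|^{2} \le S_{1} S_{2}$, where $S_{1} = \sum B(i,j)^{2} x_{i}^{2}\, c_{j}(C)/r_{i}(B)$ and $S_{2} = \sum C(i,j)^{2} y_{j}^{2}\, r_{i}(B)/c_{j}(C)$.

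To finish, I would bound $S_{1}$ by grouping over $i$: the inner sum satisfies $\sum_{j:A(i,j)\neq 0} B(i,j)^{2} c_{j}(C) \le r_{i}(B)^{2} \max_{j:A(i,j)\neq 0} c_{j}(C)$ (using that supports match), so $S_{1} \le \sum_{i} x_{i}^{2}\, r_{i}(B) \max_{j:A(i,j)\neq 0} c_{j}(C) \le \max_{(i,j):A(i,j)\neq 0} r_{i}(B) c_{j}(C)$, where the final step uses $\|x\|=1$ and the fact that for each $i$ the $j$ achieving the inner max yields an admissible pair. A symmetric argument, grouping $S_{2}$ over $j$, produces the same bound, hence $\|A\| \le \max_{(i,j):A(i,j)\neq 0} r_{i}(B) c_{j}(C)$. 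The main obstacle is choosing the Cauchy--Schwarz weights: a plain split gives only the weaker product-of-maxima estimate $\max_{i} r_{i}(B)\cdot\max_{j} c_{j}(C)$, and the weight $\sqrt{c_{j}(C)/r_{i}(B)}$ is precisely what forces the support restriction of $A$ into the final bound.
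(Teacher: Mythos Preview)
The paper does not prove this lemma; it is quoted verbatim from \cite{ss2006adv} and used as a black box in the proof of Lemma~\ref{lem:sg-fst}. So there is no ``paper's own proof'' to compare against.

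Your argument is correct and is essentially the standard proof of this inequality. The support-reduction step is sound: zeroing $B(i,j)$ and $C(i,j)$ wherever $A(i,j)=0$ leaves $A=B\circ C$ unchanged and can only shrink $r_i(B)$ and $c_j(C)$, so the bound for the reduced matrices implies it for the originals. After the reduction, any $i$ appearing in the sum has $r_i(B)>0$ (since $A(i,j)\neq 0$ forces $B(i,j)\neq 0$), and similarly for $c_j(C)$, so the weights $\sqrt{c_j(C)/r_i(B)}$ are well-defined. The weighted Cauchy--Schwarz and the subsequent grouping over $i$ (respectively $j$) are carried out correctly; the key observation that for each fixed $i$ the maximizing $j$ gives an admissible pair in $\{(i,j):A(i,j)\neq 0\}$ is exactly what pushes the bound down to the coupled maximum rather than $\max_i r_i(B)\cdot\max_j c_j(C)$.
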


%\subsubsection{Query Complexity of Radix Sort}

%\subsubsection{Computing Sign of Radix Sort}

Firstly, let us note that sorting a list cannot be easier than computing the sign of permutation that sorts that list%~\cite{b73,knuth73}
. We prove it in Lemma~\ref{lem:median}.
%
%\begin{definition}\label{defn:median}
  For positive integers \(n,l\),
  let \(\SIGNUM_{n,l}: \{0,1\}^{nl} \to \{0,1\}\) be a function that for   \(n\) binary strings of length
  \(l\) returns whether the permutation that sorts the input list is odd or even.
  Let \(s^{0}, \ldots, s^{n-1} \in \{0,1\}^{l}\) be input strings. Then
 $
    \SIGNUM_{n,l}(s^{0}, \ldots, s^{n-1}) = \sgn(\RADIX_{n,l}(s^{0}, \ldots, s^{n-1})
 $. Here $\sgn(j_0,\dots,j_{n-1})$ is $0$ if the permutation $(j_0,\dots,j_{n-1})$ is even, and $1$ otherwise. A permutation is even iff the number of inversions in the permutation is even. See \cite{Princeton} for more details.
%\end{definition}

\begin{lemma}\label{lem:median}
  Computing \(\RADIX_{n,l}\) is not easier than computing \(\SIGNUM_{n,l}\):
  
  $
    \QC(\RADIX_{n,l}) \ge \QC(\SIGNUM_{n,l}).
  $ (See Appendix \ref{apx:median}.)
\end{lemma}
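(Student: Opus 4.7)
The plan is to argue via a direct reduction: any quantum algorithm computing $\RADIX_{n,l}$ can be converted, at no additional query cost, into a quantum algorithm computing $\SIGNUM_{n,l}$. Since $Q(f)$ measures the minimum number of oracle queries over all valid algorithms, this immediately gives the desired inequality.

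More concretely, I would proceed in three short steps. First, I would let $\mathcal{A}$ be any quantum algorithm that computes $\RADIX_{n,l}$ with two-sided bounded error $\leq 1/3$, using $T = Q(\RADIX_{n,l})$ queries to the oracle encoding the strings $s^0,\dots,s^{n-1}$. On input, $\mathcal{A}$ outputs (after final measurement) a permutation $\pi = (j_0,\dots,j_{n-1}) \in \mathbb{S}_n$ that, with probability at least $2/3$, equals the true sorting permutation $\RADIX_{n,l}(s^0,\dots,s^{n-1})$. Second, I would observe that the sign of a permutation $\pi$ is a purely combinatorial function of $\pi$ itself; computing $\sgn(\pi)$, e.g.\ by counting inversions (or by reducing $\pi$ to the identity via adjacent transpositions), requires no further queries to the input oracle, only classical post-processing of the measured output.

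Third, I would define the reduced algorithm $\mathcal{A}'$ for $\SIGNUM_{n,l}$ as follows: run $\mathcal{A}$, obtain the permutation $\pi$, and return $\sgn(\pi)$. This $\mathcal{A}'$ uses exactly $T$ oracle queries. Moreover, whenever $\mathcal{A}$ outputs the correct permutation, $\mathcal{A}'$ outputs the correct sign, so $\mathcal{A}'$ errs with probability at most $1/3$. Hence $\mathcal{A}'$ is a valid bounded-error quantum query algorithm for $\SIGNUM_{n,l}$ using $T$ queries, which gives $Q(\SIGNUM_{n,l}) \leq T = Q(\RADIX_{n,l})$.

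There is really no hard step here; this is a standard query-complexity reduction. The only subtlety worth making explicit is that the post-processing step (computing $\sgn(\pi)$) is query-free — all information needed to determine the sign is already contained in the classical output of $\mathcal{A}$, so no additional access to the input strings $s^0,\dots,s^{n-1}$ is required. This is what makes the query-complexity bound pass through cleanly, independent of the time complexity of computing the sign.
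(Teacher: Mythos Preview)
Your proposal is correct and follows essentially the same approach as the paper: run an optimal algorithm for $\RADIX_{n,l}$, then apply the query-free classical post-processing $\sgn(\cdot)$ to the output permutation. The paper's proof is just a more terse statement of exactly this reduction.
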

%\subsubsection{Restricting input of \(\SIGNUM\)}

We restrict input for \(\SIGNUM\) as follows (restricting input cannot make
computing \(\SIGNUM\) easier).
Let us interpret an \(n \times l\) zero-one input matrix as a combination of \(l / \log_2 n\) bands of zeroes and ones of size \(n \times \log_2 n\). We can interpret each band as a list of \(n\) numbers from \(0\) to \(n-1\), because \(\log_2 n\) bits are enough to encode these numbers.
Let us call a band {\it proper} if it either contains a permutation (e.g. it contains each number from \(0\) to \(n-1\) exactly once) or it contains only zeroes.
We consider only inputs for \(\SIGNUM\) that are composed from proper bands.
%
%\subsubsection{Decomposing \(\SIGNUM\)}
%
%\begin{definition}\label{defn:med1}
  Let \(\SG_{n}: \{0,1\}^{n\log n} \to \{0,1,\bot\}\) be a function that gets
  a proper band of \(n\) words of \(\log n\) symbols and returns either a sign
  of permutation, or \(\bot\) if all words are equal. Here by a sign of permutation, we mean the result of $sgn$ function.
%\end{definition}
%
%\begin{definition}\label{defn:fst}
  Let \(\FST_{m} : \{0,1,\bot\}^{m} \to \{0,1\}\) be a function that gets \(m\)
  symbols and returns the first symbol that is not equal to \(\bot\). If all
  arguments are equal to \(\bot\), then \(\FST_{m}\) returns \(0\).
%\end{definition}
%
It is easy to see that the following holds
\begin{lemma}\label{lm:signum}
$\SIGNUM_{n,l} = \FST_{l/\log n} \circ \big( \SG_{n}, \ldots, \SG_{n} \big),$
where each \(\SG_{n}\) acts on its own band. (See Appendix \ref{apx:signum}).
\end{lemma}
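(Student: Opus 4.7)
The plan is to trace how lexicographic sorting interacts with the band decomposition. First I would view the $n\times l$ zero-one input as a left-to-right concatenation of $l/\log n$ bands of width $\log n$, so that each string $s^j$ is the concatenation of its $j$-th word from each successive band. Since the lexicographic comparison of two strings $s^j$ and $s^{j'}$ is decided by the first band in which their words differ, the sort permutation depends only on the ``leftmost distinguishing'' band.

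Next I would analyze the two cases allowed for a proper band. If a band consists entirely of zero words, every string has the same word there, and that band contributes nothing to any pairwise comparison; correspondingly $\SG_n$ returns $\bot$. If a band is a permutation $(v_0,\ldots,v_{n-1})$ of $\{0,\ldots,n-1\}$, then the $n$ words in that band are all distinct, so the relative order of every pair of strings is completely fixed by this band alone, regardless of the contents of subsequent bands. In particular there are no ties, so the tiebreaker on indices is not invoked.

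Let $k$ be the index of the first proper band that is a permutation. Because lex order is decided by band $k$, the sort permutation $(j_0,\ldots,j_{n-1})$ must satisfy $v_{j_i}=i$ for every $i$, i.e.\ $(j_0,\ldots,j_{n-1})=v^{-1}$. Using the standard identity $\sgn(v^{-1})=\sgn(v)$, the sign of the sort permutation equals $\sgn(v)$, which is exactly the value that $\SG_n$ outputs on band $k$. For every band $k'<k$, the words are all zero and $\SG_n$ returns $\bot$, so $\FST_{l/\log n}$ selects the value produced on band $k$, matching $\SIGNUM_{n,l}$ on this input.

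Finally I would treat the degenerate case in which every band is all zeros: all input strings equal $0^l$, so the tiebreaker forces the sort permutation to be the identity, whose sign is $0$; simultaneously each $\SG_n$ returns $\bot$, and $\FST_{l/\log n}$ returns $0$ by definition, so the two sides still agree. The only genuine step beyond bookkeeping is the identification $(j_0,\ldots,j_{n-1})=v^{-1}$ together with $\sgn(v^{-1})=\sgn(v)$; everything else follows from the definition of lex order and the structure of proper bands.
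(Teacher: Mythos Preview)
Your proposal is correct and follows essentially the same route as the paper's proof: both isolate the first non-zero (permutation) band, identify the sorting permutation with the inverse of that band's permutation, invoke $\sgn(\sigma^{-1})=\sgn(\sigma)$, and handle the all-zero input as a separate degenerate case matching $\FST$'s convention. Your write-up is slightly more explicit about why the first permutation band alone determines the lexicographic order (distinctness of the words there), but the argument is otherwise identical.
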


%\subsubsection{Adversary bound for \(\FST\)}
We can obtain Adversary bound for the introduced functions.
\begin{lemma}\label{lem:adv-fst} We have
 $
    \ADV(\FST_{m}) = \Omega(\sqrt{m}).
  $ (See Appendix \ref{apx:adv-fst}).
\end{lemma}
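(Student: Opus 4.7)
The plan is to apply Theorem~\ref{th:adv-lowerbound} directly with a witness matrix that captures the fact that $\FST_m$ subsumes an \textsc{Or}-like problem: distinguishing the all-$\bot$ input from inputs that carry a single $1$ somewhere. This is exactly the structure that yields the classical $\sqrt{m}$ adversary bound for the $m$-bit \textsc{Or} function, so I expect the same construction to transfer with only cosmetic changes.

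First I would restrict attention to a very small subset of the input space. Let $\sigma^{\star} = (\bot, \ldots, \bot)$, on which $\FST_m$ outputs $0$, and for each $i \in \{1, \ldots, m\}$ let $\tau^{(i)}$ be the input that has $1$ at position $i$ and $\bot$ at every other position, on which $\FST_m$ outputs $1$. Define a symmetric, non-negative witness matrix $A$ whose only non-zero entries are $A(\sigma^{\star}, \tau^{(i)}) = A(\tau^{(i)}, \sigma^{\star}) = 1$ for $i = 1, \ldots, m$, and zero elsewhere. Because $\FST_m(\sigma^{\star}) \neq \FST_m(\tau^{(i)})$ for every $i$, this matrix satisfies the constraint $A(\sigma, \tau) = 0$ whenever $f(\sigma) = f(\tau)$, as required.

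Next I would bound the two spectral norms appearing in the adversary ratio. The matrix $A$ is the adjacency matrix of the star $K_{1,m}$ on $m+1$ vertices; its non-zero eigenvalues are $\pm\sqrt{m}$, with top eigenvector supported on $\sigma^{\star}$ together with the uniform superposition over the $\tau^{(i)}$, so $\|A\| = \sqrt{m}$. For the difference matrix $D_i$, the only pairs in the support of $A$ that disagree at position $i$ are $(\sigma^{\star}, \tau^{(i)})$ and its transpose, since $\sigma^{\star}$ and $\tau^{(j)}$ both carry $\bot$ at position $i$ whenever $j \neq i$. Hence $A \circ D_i$ is a rank-one matrix with exactly two unit entries, so $\|A \circ D_i\| = 1$. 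Theorem~\ref{th:adv-lowerbound} then yields
\[
  \ADV(\FST_m) \;\ge\; \frac{\|A\|}{\max_i \|A \circ D_i\|} \;=\; \sqrt{m}.
\]

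I do not foresee significant obstacles, since both spectral norms are standard. The one point requiring care is that Theorem~\ref{th:adv-lowerbound} is stated for a Boolean input alphabet, whereas $\FST_m$ lives on $\{0,1,\bot\}^{m}$. I would resolve this either by appealing to the standard extension of the adversary method to arbitrary finite input alphabets (where $D_i$ is still the indicator of disagreement at position $i$), or by encoding each symbol in two bits and observing that, on the restricted input set $\{\sigma^{\star}\} \cup \{\tau^{(i)}\}_{i=1}^{m}$, each coordinate flips in at most one of its two bit positions across the support of $A$, which alters the bound by at most a constant factor.
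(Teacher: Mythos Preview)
Your proposal is correct and matches the paper's proof essentially line for line: the paper uses the same witness matrix supported on the pair $(\bot^m,\,\bot^m_{i\mapsto 1})$, obtains $\|A\|\ge\sqrt{m-1}$ via a test vector (your star-graph eigenvalue argument gives the slightly sharper $\sqrt{m}$), and observes that $\|A\circ D_i\|=1$ for every $i$. Your remark about the ternary input alphabet is a point the paper leaves implicit; either resolution you suggest is fine.
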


%\subsubsection{Adversary bound for \(\SG\)}

\begin{lemma}\label{lem:sg-fst}  We have
 $
    \ADV(\SG_{n}) = \Omega(n)
  $. (See Appendix \ref{apx:sg-fst}).
\end{lemma}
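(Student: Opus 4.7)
My plan is to apply the adversary bound of Theorem~\ref{th:adv-lowerbound} directly to $\SG_{n}$, using an adversary matrix built from the adjacent-transposition Cayley graph of $S_{n}$. The key observation is that swapping two consecutive positions of a permutation flips its sign, so any two permutation-bands related by a single such swap are mapped by $\SG_{n}$ to opposite values in $\{0,1\}$. Concretely, I index rows and columns of $A$ by proper bands and set $A(\sigma,\tau)=1$ iff $\sigma$ and $\tau$ are both permutations and $\tau$ is obtained from $\sigma$ by swapping two consecutive positions of the band; all other entries (in particular every entry involving the all-zero band) are zero. The matrix $A$ is non-negative and symmetric, and $A(\sigma,\tau)=0$ whenever $\SG_{n}(\sigma)=\SG_{n}(\tau)$, so it is a valid adversary matrix.

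To lower-bound $\|A\|$, observe that, restricted to the permutation inputs, $A$ is the bipartite adjacency matrix between the $n!/2$ even and $n!/2$ odd permutations, with an edge whenever the two are related by one of the $n-1$ adjacent transpositions. This bipartite graph is $(n-1)$-regular on each side, so the uniform (all-ones) vectors on the two sides form a pair of singular vectors of the off-diagonal block with singular value exactly $n-1$; hence $\|A\|\ge n-1$.

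For each input bit position $i$, interpret it as bit $k$ of word position $j$ of the band. An adjacent transposition can change bit $i$ only if it actually involves position $j$, i.e.\ only if it swaps positions $(j-1,j)$ or $(j,j+1)$. Therefore every row and every column of $A\circ D_{i}$ has at most two nonzero entries, and since $A\circ D_{i}$ is a non-negative $\{0,1\}$ matrix, $\|A\circ D_{i}\|\le\sqrt{2\cdot 2}=2$ by the standard inequality $\|M\|\le\sqrt{\|M\|_{1}\|M\|_{\infty}}$ (equivalently via Lemma~\ref{lem:circ-prod} with a symmetric split). Combining these bounds gives $\ADV(\SG_{n}) \ge (n-1)/2 = \Omega(n)$, as required. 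The step that requires the most care is the claim $\|A\|\ge n-1$: it relies on the bi-regularity of the adjacent-transposition bipartite graph, and I would verify it in the final write-up by explicitly exhibiting the all-ones vectors on each side as singular vectors of the off-diagonal block.
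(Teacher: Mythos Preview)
Your proof is correct and follows the same adversary-method template as the paper, but with a different choice of adversary matrix. The paper sets $A(\sigma,\tau)=1$ whenever the even permutation $\sigma$ and the odd permutation $\tau$ differ by an \emph{arbitrary} transposition $(s\,t)$, which gives $\|A\|=\Theta(n^{2})$ and then bounds $\|A\circ D_{i}\|\le n/2$ via the counting argument that, for a fixed word-position $j$ and bit $k$, exactly $n/2$ transpositions $(j\,s)$ send $\sigma(j)$ to a value with the opposite $k$-th bit. You instead restrict to \emph{adjacent} transpositions, which drops the numerator to $\|A\|=n-1$ (by $(n-1)$-regularity of the adjacent-transposition Cayley graph) but collapses the denominator to the constant $2$, since only the swaps $(j-1,j)$ and $(j,j+1)$ can touch word $j$ at all. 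Either ratio is $\Omega(n)$. Your variant makes the $\|A\circ D_{i}\|$ bound a pure locality observation rather than a bit-counting argument, at no real cost; the paper's variant would become preferable only if one wanted a sharper constant, and in fact both yield essentially $(n-1)$ after the dust settles.
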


%\subsubsection{Query complexity for \(\SIGNUM\)}

Finally, we apply Theorem \ref{thm:adv-compose} to Lemmas \ref{lem:adv-fst}
and \ref{lem:sg-fst}.
\begin{lemma}\label{thm:adv-signum}
 The following statement is right  $
    \ADV(\SIGNUM_{n,l}) = \Omega(n\sqrt{l/\log n}).
 $
\end{lemma}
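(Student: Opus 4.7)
The plan is to combine the three ingredients already established: the decomposition of $\SIGNUM_{n,l}$ given in Lemma \ref{lm:signum}, the lower bound $\ADV(\FST_{m}) = \Omega(\sqrt{m})$ from Lemma \ref{lem:adv-fst}, and the lower bound $\ADV(\SG_{n}) = \Omega(n)$ from Lemma \ref{lem:sg-fst}, and to plug them into the composition identity of Corollary \ref{thm:adv-compose}. All the technical work is in the preceding lemmas; this lemma is essentially the final assembly step.

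First I would verify that the hypotheses of Corollary \ref{thm:adv-compose} are met for the decomposition $\SIGNUM_{n,l} = \FST_{l/\log n} \circ (\SG_{n}, \ldots, \SG_{n})$. The $l/\log n$ copies of $\SG_{n}$ act on the disjoint bands of size $n \log n$ bits (by the way the bands were defined before Lemma \ref{lm:signum}), and each copy is the same function, so both ``disjoint subsets'' and ``every $g_{k} = g$'' hold. Applying the corollary then gives directly
\[
\ADV(\SIGNUM_{n,l}) \;=\; \ADV(\FST_{l/\log n}) \cdot \ADV(\SG_{n}) \;=\; \Omega\!\left(\sqrt{l/\log n}\right)\cdot \Omega(n) \;=\; \Omega\!\left(n\sqrt{l/\log n}\right),
\]
which is the claimed bound.

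The only point that needs some care, and which I expect to be the main (mild) obstacle, is the domain issue. The composition theorem is stated for Boolean-input functions, but $\FST_{m}$ takes inputs from the three-symbol alphabet $\{0,1,\bot\}$, matching the output alphabet of $\SG_{n}$; moreover, we have restricted $\SIGNUM$-inputs to proper bands. I would justify this in two short remarks: the three-valued alphabet can be encoded by two bits per symbol without affecting the adversary bounds asymptotically (this is standard and the argument in Lemma \ref{lem:adv-fst} already deals with the non-Boolean input), and restricting the input domain can only decrease quantum query complexity, so any lower bound obtained on the restricted, band-structured inputs is automatically a lower bound for the full $\SIGNUM_{n,l}$. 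With these observations, the composition identity applies and the bound follows immediately.
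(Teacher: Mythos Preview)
Your proposal is correct and follows exactly the paper's approach: the paper simply states that the lemma is obtained by applying Corollary~\ref{thm:adv-compose} to Lemmas~\ref{lem:adv-fst} and~\ref{lem:sg-fst} via the decomposition of Lemma~\ref{lm:signum}. Your additional remarks on the non-Boolean alphabet and the restriction to proper bands are sensible clarifications that the paper leaves implicit.
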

Using this lemma and Theorem \ref{th:adv-lowerbound} we obtain the lower bound for the string sorting problem.
\begin{theorem} The following statement is right 
  $
    \QC(\RADIX_{n,l}) = \Omega(n\sqrt{l/\log n})
 $.
\end{theorem}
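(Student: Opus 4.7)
The proof is essentially a direct chain of inequalities, with all the substantive work already done in the preceding lemmas. The plan is to combine three ingredients: the reduction from sorting to sign of permutation, the generic adversary lower bound for quantum query complexity, and the explicit adversary bound for $\SIGNUM_{n,l}$.

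First I would invoke Lemma \ref{lem:median}, which states $\QC(\RADIX_{n,l}) \ge \QC(\SIGNUM_{n,l})$, reducing the problem to lower-bounding $\QC(\SIGNUM_{n,l})$. Next, I would apply the adversary bound from Theorem \ref{th:adv-lowerbound} with the standard choice of bounded error $\epsilon = 1/3$ (matching the definition of $\QC$ used in Section~\ref{sec:prelims}). Since the prefactor $\frac{1-2\sqrt{\epsilon(1-\epsilon)}}{2}$ evaluates to a positive absolute constant, this yields $\QC(\SIGNUM_{n,l}) = \Omega(\ADV(\SIGNUM_{n,l}))$. Finally, substituting the bound $\ADV(\SIGNUM_{n,l}) = \Omega(n\sqrt{l/\log n})$ from Lemma \ref{thm:adv-signum} gives the desired conclusion.

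Chaining these three steps yields
\[
\QC(\RADIX_{n,l}) \;\ge\; \QC(\SIGNUM_{n,l}) \;\ge\; \frac{1-2\sqrt{\epsilon(1-\epsilon)}}{2}\,\ADV(\SIGNUM_{n,l}) \;=\; \Omega\!\left(n\sqrt{l/\log n}\right),
\]
which is exactly the claimed lower bound.

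Since every ingredient is already established, there is essentially no obstacle here; the only small point to be careful about is verifying that the error parameter $\epsilon$ implicit in the definition of $\QC$ is compatible with the constant appearing in Theorem \ref{th:adv-lowerbound}, so that the multiplicative factor between $\QC$ and $\ADV$ is genuinely a positive constant (and hence absorbed into the $\Omega$ notation). Once this is noted, the theorem follows in one line.
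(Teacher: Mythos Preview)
Your proposal is correct and matches the paper's own argument: the paper states the theorem follows directly from Lemma~\ref{thm:adv-signum} together with Theorem~\ref{th:adv-lowerbound}, and the reduction via Lemma~\ref{lem:median} that you make explicit is precisely the link needed to pass from $\SIGNUM_{n,l}$ back to $\RADIX_{n,l}$.
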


%%%%%%%%%%%%%%%%%%%%%%%%%%%%%%
%         Applications
%%%%%%%%%%%%%%%%%%%%%%%%%%%%%%

\section{Auto-Complete Problem}\label{sec:appl}

\paragraph{Problem.} Assume that we use some constant size alphabet, for example, binary, ASCII or Unicode. We work with a sequence of strings $\cS=(s^{i_1},\dots,s^{i_{|\cS|}})$, where $|\cS|$ is the length of the sequence; and $i_j$ are increasing indexes of strings. In fact, the index $i_j$ is the index of the query for adding this string to $\cS$.
Initially, the sequence $\cS$ is empty.
Then, we have $n$ queries of two types.
The first type is adding a string $s$ to the sequence $\cS$. Let $\#(u)=|\{j: u=s^j, j\in\{i_1,\dots,i_{|\cS|}\}\}|$ be a number of occurrence (or ``frequency'') of a string $u$ among $\cS$.
The second type is querying the most frequent complement from $\cS$ of a string $t$.
 Let us define it formally. 
 If $t$ is a prefix of $s^j$, then we say that $s^j$ is a complement of $t$.
 Let $D(t)=\{s^j:j\in\{i_1,\dots,i_{|\cS|}\}, s^j$ is a complement of $t\}$ be the set of complements for $t$, and let $md(t)=max\{\#(s^r): r\in\{i_1,\dots,i_{|\cS|}\}, s^r\in D(t)\}$ be the maximal ``frequency'' of strings from $D(t)$. The problem is to find the index $mi(t)=min\{r:r\in\{i_1,\dots,i_{|\cS|}\},  s^r\in D(t), \#(s^r)=md(t)\}$.

Here we present the main idea of a solution and a full description of the algorithm is presented in Appendix \ref{apx:complement}.

We use a Self-Balanced Search tree for our solution. A node $v$ of the tree stores 4-tuple $(i,c,j,jc)$, where $i$ is an index of a string $s^i$ that is ``stored'' in the node, and $c=\#(s^i)$. The tree is a Search tree by this strings $s^i$ similar to storing strings in Section \ref{sec:sort}. For comparing strings, we use a quantum procedure from Lemma \ref{lm:strcmp}. Therefore, our tree is noisy.
The index $j$ is the index of the most ``frequent'' string in the sub-tree which root is $v$, and $jc=\#(s^j)$. Formally, for any vertex $v'$ from this sub-tree if $(i',c',j',jc')$ is associated with $v'$, then $c'<jc$ or ($c'=jc$ and $i'\geq j$).

Initially, the tree is empty. Let us discuss processing the first type of queries. We want to add a string $s$ to $\cS$. We search a node $v$ with associated $(i,c,j,jc)$ such that $s^i=s$. If we can find it, then we increase $c\gets c+1$. It means $j$ parameter of the node $v$ or its ancestors can be updated. There are at most $O(\log n)$ ancestors because height of the tree is $O(\log n)$. So, for each  ancestor of $v$ that associated with $(i',c',j',jc')$, if $jc'<c$ or ($jc'=c$ and $j'>i$), then we update $j'\gets i$ and $jc\gets c$.

If we cannot find the string $s$ in $\cS$, then we added a new node to the tree with associated 4-tuple $(r,1,r,1)$, where $r$ is the index of the query. Note that if we re-balance nodes in the red-black tree, then we easily can recompute $j$ and $jc$ elements of nodes. 

Let us discuss processing the second type of queries. All strings $s^i$ that can be complement for $t$ belongs to the set $C(t)=\{s^r: r\in\{i_1,\dots,i_{|\cS|}\}, s^r\geq t$ and $s^r<t'$\}. Here we can obtain $t'$ from the string $t$ by replacing the last symbol by the next symbol from the alphabet. Formally, if $t=(t_1,\dots,t_{|t|-1},t_{|t|})$, then $t'=(t_1,\dots,t_{|t|-1},t'_{|t|})$, where the symbol $t'_{|t|}$ succeeds $t_{|t|}$ in the alphabet. We can say, that $C(t)=D(t)$. The query processing consist of three steps.

\textbf{Step 1.} We search for a node $v_{Rt}$ such that $t$ should be in the left sub-tree of $v$ and $t'$ should be in the right sub-tree of $v$. Formally, $\beta(v_{Rt})\leq t\leq \alpha(v_{Rt})$, and $\alpha(v_{Rt})< t'\leq \gamma(v_{Rt})$. In the next two steps, we compute the index of the required string $j_{ans}$ and $jc_{ans}=\#(s^{j_{ans}})$.

\textbf{Step 2.} Let us look to the left sub-tree with $t$. Let us find a node $v_L$ that contains an index $i_L$ of  the minimal string $s^{i_L}\geq t$. Then, we go up from this node. Let us consider a vertex $v$. If it is the right child of $\textsc{Parent}(v)$, then it is bigger than the string from $\textsc{Parent}(v)$ and the left child's sub-tree, so we do nothing. If it is the left child of $\textsc{Parent}(v)$, then it is less than the string from $\textsc{Parent}(v)$ and all strings from the right child's sub-tree, so we update $j_{ans}$ and $jc_{ans}$ by values from the parent node and the right child. Formally, if $(i_p,c_p,j_p,jc_p)$ for the $\textsc{Parent}(v)$ node and $(i_r,c_r,j_r,jc_r)$ for the right child node, then we do the following actions. If $c_p>jc_{ans}$ or ($c_p=jc_{ans}$ and $i_p<j_{ans}$), then $j_{ans}\gets i_p$ and $jc_{ans}\gets c_p$.    If $jc_r>jc_{ans}$ or ($jc_r=jc_{ans}$ and $j_r<j_{ans}$), then $j_{ans}\gets j_r$ and $jc_{ans}\gets jc_r$.    

\textbf{Step 3.} Let us look to the right sub-tree with $t'$. Let us find a node $v_R$ that contains an index $i_R$ of  the maximal string $s^{i_R}< t'$. Then, we go up from this node and do the symmetric actions as in Step 2.

Each of these three steps requires $O(\log n)$ running time because each of them observes nodes of a single branch. Finally, we obtain quantum complexity of the problem.
\begin{theorem}\label{th:comlete-quantum}
The quantum algorithm with a noisy Self-Balanced Search tree for Auto-Complete Problem has $O(\sqrt{nL}\log n)$ running time and error probability $1/3$, where $L$ is the sum of lengths of all queries. Additionally, the lower bound for quantum running time is $\Omega(\sqrt{L})$. (See Appendix \ref{apx:complement}).
\end{theorem}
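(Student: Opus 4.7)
The plan is to establish the upper bound by a per-query cost analysis combined with a Cauchy--Schwarz summation over the query lengths $l_1, \dots, l_n$, and to prove the lower bound by reducing an OR instance on $L$ bits to an Auto-Complete instance of total length $\Theta(L)$.

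For the upper bound, observe that each query (type 1 or type 2) decomposes into a bounded number of root-to-leaf traversals of the noisy self-balanced search tree (one for add, three for the complement query as described in Steps 1--3 above). Plug the quantum string comparison of Lemma \ref{lm:strcmp} with $\xi = 0.1$ into the tree as its noisy comparing procedure; since $0.1 < 0.5 - \delta$ with, say, $\delta = 0.4$, Theorem \ref{th:bst-compl} applies. Choosing a per-operation target error $\varepsilon = \Theta(1/n)$ makes each traversal cost $O(\log(n/\varepsilon)) = O(\log n)$ comparisons, and a union bound over the $O(n)$ traversals keeps the cumulative failure probability below $1/3$. Because every comparison invoked during the $i$-th query has the query string (of length $l_i$) as one of its two arguments, Lemma \ref{lm:strcmp} bounds the cost of a single comparison by $O(\sqrt{l_i})$. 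Hence the work of the $i$-th query is $O(\sqrt{l_i}\,\log n)$, and summing with Cauchy--Schwarz gives
\begin{equation*}
\sum_{i=1}^{n} O\!\left(\sqrt{l_i}\,\log n\right) \;=\; O\!\left(\log n \cdot \sum_{i=1}^{n} \sqrt{l_i}\right) \;\le\; O\!\left(\log n \cdot \sqrt{n \sum_{i=1}^{n} l_i}\right) \;=\; O(\sqrt{nL}\,\log n),
\end{equation*}
as claimed.

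For the lower bound, reduce from unstructured search (the OR function) on $L$ bits, which has quantum query complexity $\Omega(\sqrt{L})$ by \cite{BHCMW98}. Given an input $x \in \{0,1\}^{L}$, construct an Auto-Complete instance consisting of one type-1 query that inserts the string $s = x$, padded with a constant number of fixed decoy insertions and a single type-2 complement query with a carefully chosen $t$, so that the returned index $mi(t)$ differs depending on whether $x$ equals the all-zeros string or contains at least one $1$. The total length of all involved query strings is $\Theta(L)$, so any Auto-Complete algorithm of running time $o(\sqrt{L})$ would solve OR on $L$ bits in $o(\sqrt{L})$ quantum queries, contradicting the OR lower bound. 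Hence $\Omega(\sqrt{L})$ is a lower bound on the running time.

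The main obstacle is keeping the aggregate failure probability below $1/3$ while retaining the $O(\log n)$ cost per traversal: $\varepsilon$ must be small enough that the union bound absorbs both the traversal-level errors from Theorem \ref{th:bst-compl} and the constant-probability comparison errors inherited from Lemma \ref{lm:strcmp}, yet large enough that $\log(n/\varepsilon) = O(\log n)$. A secondary obstacle is designing the lower-bound reduction so that the single output index $mi(t)$ transmits the one Boolean bit of information needed to solve OR; this is where the constant decoy insertions are useful, since they guarantee that $D(t)$ is nonempty and that $mi(t)$ varies exactly according to whether $x$ contributes a complement of $t$ with higher frequency than the decoys.
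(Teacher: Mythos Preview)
Your proposal is correct and follows essentially the same approach as the paper: the upper bound combines the per-query cost $O(\sqrt{l_i}\log n)$ obtained from Theorem~\ref{th:bst-compl} (with per-operation error $1/\mathrm{poly}(n)$) and Lemma~\ref{lm:strcmp} with a Cauchy--Schwarz summation over the $n$ query lengths, and the lower bound is by reduction from unstructured search on $\Theta(L)$ bits. The only cosmetic difference is in the reduction gadget: the paper (Lemma~\ref{lm:unstruct-search}) distributes the hidden bits over $\Theta(n)$ short strings and uses a frequency tie-break to detect the presence of a single $1$, whereas you embed the whole OR input as one string of length $L$ together with $O(1)$ fixed decoys; both constructions are valid and yield the same $\Omega(\sqrt{L})$ bound.
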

Let us consider the classical (deterministic or randomize) case. If we use the same Self-balanced search tree, then the running time is $O(L\log n)$. At the same time, if we use the Trie (prefix tree) data structure 
%\cite{d59,b98,b2008,knuth73}
\cite{knuth73}, then the complexity is $O(L)$.
We can show that it also the lower bound for classical case.
\begin{lemma}\label{lm:comlete-classical}
The classical running time for Auto-Complete Problem  is $\Theta(L)$, where $L$ is the sum of lengths of all queries. (See Appendix \ref{apx:complement}).
\end{lemma}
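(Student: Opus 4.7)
The plan has two matching parts. For the upper bound $O(L)$, I would use a Trie (prefix tree) augmented with two extra fields per node: a counter of how many added strings end exactly at that node, and a pair $(j,c)$ giving the smallest-index maximum-frequency string in the subtree rooted there (matching the tie-breaking in the problem statement). To process an add of $s$, walk the trie (extending it as needed) along the path spelled by $s$ in $O(|s|)$ steps, increment the end-of-string counter at the terminal node, then propagate any change to the $(j,c)$ fields back up the insertion path in $O(|s|)$ further work. To process a complete with $t$, walk from the root following $t$ in $O(|t|)$ steps (if we fall off the trie, return ``no complement''), and read the stored $(j,c)$ at the terminal node. Summing across all queries yields $O(L)$ total time.

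For the lower bound $\Omega(L)$, I would reduce from the classical query complexity of string equality. For any $l$, consider Auto-Complete instances consisting of a single add of $s \in \{0,1\}^l$ followed by a single complete query with $t \in \{0,1\}^l$, so that $L = 2l$. Because $|t| = |s|$, the string $t$ is a prefix of $s$ iff $t = s$, so the correct output of the complete query is the index of $s$ (with count $1$) when $t = s$ and ``no complement'' otherwise. Any algorithm for Auto-Complete therefore also decides equality of two $l$-bit strings under this input encoding, and hence inherits the query complexity of equality.

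The core step is the classical $\Omega(l)$ lower bound for equality of $l$-bit strings. I would establish it via Yao's minimax principle, using the distribution that with probability $1/2$ outputs $(s,s)$ for uniform $s$, and with probability $1/2$ outputs $(s,s^{(j)})$ for uniform $s$ and uniform $j \in \{1,\dots,l\}$, where $s^{(j)}$ flips bit $j$ of $s$. A standard adversary argument shows that any deterministic algorithm making fewer than $\alpha l$ bit queries, for a suitable constant $\alpha$, distinguishes the two components with only bounded advantage: the algorithm detects the difference only by querying the single flipped position, and the chance of that event scales as the ratio of its queries to $l$. Yao's principle then transfers the bound to bounded-error randomized algorithms, giving $\Omega(l) = \Omega(L)$.

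The main obstacle is simply presenting the equality lower bound cleanly; the reduction itself is immediate from the length-matching observation, and the augmented-trie upper bound is a routine bookkeeping argument. No part of the plan requires machinery beyond standard adversary and Yao techniques.
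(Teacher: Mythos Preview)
Your upper bound is essentially the paper's: both augment a trie with, at each node, the frequency of the string ending there and a pointer to the best completion in the subtree, and both observe that an add or a complete touches only the $O(|s|)$ or $O(|t|)$ nodes on one root-to-node path, giving $O(L)$ in total.

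Your lower bound is correct but takes a different route. The paper proves a single reduction (its Lemma~\ref{lm:unstruct-search}) from unstructured search over $\Theta(L)$ bits: it builds an instance with $\Theta(n)$ add queries whose strings hide at most one $1$ among $\Theta(nk)=\Theta(L)$ positions, balanced so that the answer to a final complete query with $t=(0)$ flips exactly when that $1$ exists. This reduction is heavier, but the paper reuses it verbatim for the quantum lower bound $\Omega(\sqrt{L})$ in Theorem~\ref{th:comlete-quantum}. Your reduction is lighter: with just one add of $s\in\{0,1\}^{l}$ and one complete of $t\in\{0,1\}^{l}$ you recover string equality on $l$ bits, and then a standard Yao argument gives $\Omega(l)=\Omega(L)$. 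This is more elementary and avoids the multi-query construction entirely; the trade-off is only that it is tailored to the classical bound (though in fact the same two-query instance also yields the $\Omega(\sqrt{L})$ quantum bound, since equality reduces to OR of the bitwise XOR).
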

If $O(n)$ strings  have length at least $\Omega((\log_2 n)^2)$, then we obtain a quantum speed-up.

%%%%%%%%%%%%%%%%%%%%%%%%%%%
%  Conclusion
%%%%%%%%%%%%%%%%%%%%%%%%%%
\section{Conclusion}\label{sec:concl}
We suggest the Walking tree technique for noisy tree data structures. We apply the technique to the Red-black tree and the Segment tree. We show that the complexity of the main operations is asymptotically equal to the complexity of standard (not noisy) versions of the data structures.
We use noisy Red-black tree for two problems: The Strings Sorting Problem and The Auto-Complete Problem. The considered algorithms are quantum because they use the quantum string comparing algorithm as a subroutine. This subroutine demonstrates quadratic speed-up, but it has a non-zero error probability. For the String Sorting Problem, we show lower and upper bounds that are the same up to a log factor. For Auto-Complete Problem, we obtain quantum speed-up for the problems if $(\log n)^2=o(l)$ where $n$ is the number of queries and $l$ is the size of an input string of a query. 

Future work can be applying the Walking tree technique to other tree data structures and obtain the noisy version of them with good complexity of main operations. Also, it is interesting to find more applications for noisy data structures. We assume that quantum algorithms should be one of the fruitful fields for such applications. It is interesting to meet quantum lower and upper bounds for considered problems.
%
%use the procedure as a string compat ---- Bibliography ----
%
% BibTeX users should specify bibliography style 'splncs04'.
% References will then be sorted and formatted in the correct style.
%
 \bibliographystyle{splncs04}
 \bibliography{main}
\newpage
\appendix
\section{Proof of Theorem \ref{th:wt-compl}}\label{apx:wt-compl}
\textbf{Theorem \ref{th:wt-compl}.} {\em
Suppose, $s=O(h+\log(1/\varepsilon))$. Then, Algorithm \ref{alg:walking} does the same action as Algorithm \ref{alg:maincomand} with error probability $\varepsilon$.}
\begin{proof}
Let us consider the walking tree. We emulate the counter by replacing it with a nodes chain of length $s+1$.
Formally, for a node $u\in\V (W)$ we add $s+1$ nodes $d^u_1,\dots,d^u_{s+1}$ such that $\textsc{Parent}(d^u_1)=u$, $\textsc{Parent}(d^u_i)=d^u_{i-1}$ for $i\in\{2,\dots,s+1\}$. The only child of $d^u_i$ is $d^u_{i+1}$ for $i\in\{1,\dots,s\}$ and $d^u_{s+1}$ does not have children.

In that case the increasing of $c(u)$ can be emulated by moving from $d^u_{c(u)}$ to $d^u_{c(u)+1}$. The decreasing can be emulated by moving from $d^u_{c(u)}$ to $d^u_{c(u)-1}$. We can assume that $d^u_0$ is the node $u$ itself.

Let $u_{target}$ be the target node, i.e. $\textsc{IsItTheTarget}(u_{target})=True$. Let us consider the distance $L$ between the target node $d^{u_{target}}_{s+1}$ and the current node in the modified tree. The distance $L$ is a random variable. Each step of the walk increase or decrease the distance $L$ by $1$. So, we can present $L=dist(root,d^{u_{target}}_{s+1})-Y$, where  $root$ is the root node of $W$, $Y=(Y_1+\dots+Y_s)$, and $Y_i\in\{-1,+1\}$ are independent random variables that represent $i$-th step and show increasing or decreasing the distance. Let $Y_i=+1$ if we move in the correct direction, and $Y_i=-1$ if we move in the wrong direction. Note that the probability of moving to the correct direction ($Y_i=+1$) is at least $2/3$ and the probability of moving to the wrong direction ($Y_i=-1$) is at most $1/3$.
From now on without loss of generality, we assume that $\Prb{Y_i=+1}=2/3$ and $\Prb{Y_i=-1}=1/3$.

If $L\leq s$, then we are in the $d^{u_{target}}_{i}$ node in the modified tree and in the $u_{target}$ node in the original walking tree $W$.
Note that $dist(root,d^{u_{target}}_{s+1})\leq h+s$, where $h=h(W)$ by the definition of the height of a tree.
Therefore, $L\leq s$ means $Y=Y_1+\dots+Y_{s}\geq h$. So, the probability of success of the operation is the probability of the $Y\geq h$ event, i.e. $\mathrm{\mathbf{Pr}}_{success}=\Prb{Y\geq h}$.

Let $X_i=(Y_i+1)/2$ for $i\in\{1,\dots,s\}$. We treat $X_1,\dots,X_s$ as independent binary random variables. Let $X=\sum_{i=1}^s X_i$. For such $X$ and for any $0<\delta\leq1$, the following form of Chernoff bound \cite{mr2006} holds
\begin{equation}\label{eq:chernoff}
\Prb{X<(1-\delta)\Exp{X}}\leq \exp(-\Exp{X}\delta^2/2).    
\end{equation}

Since $\Prb{X_i=1}=\Prb{Y_i=+1}=2/3$, we have $\Exp{X}=2s/3$ and the inequality (\ref{eq:chernoff}) becomes
$$\Prb{X<2s\cdot(1-\delta)/3}\leq \exp(-\delta^2s/3).$$

Substituting $Y$ for $X$ we get
$$\Prb{Y<s\cdot(4(1-\delta)/3-1)}\leq \exp(-\delta^2s/3).$$

From now on without loss of generality, we assume that $s=\ceil{c\cdot (h+\log(1/\varepsilon))}$ for some $c>0$. Let $\delta=\frac{1}{6}$ and $c\geq108$.

In the following steps, we relax the inequality by obtaining less tight bounds for the target probability.

Firstly, we obtain a new lower bound
\begin{equation*}
\begin{split}
s\cdot(4(1-\delta)/3-1)&=\ceil{c\cdot(h+\log(1/\varepsilon))}/9\geq c\cdot(h+\log(1/\varepsilon))/9\geq\\
&\geq c\cdot h/9\geq 108\cdot h/9>h,
\end{split}
\end{equation*}
and hence
$$\Prb{Y<h}\leq\Prb{Y<s\cdot(4(1-\delta)/3-1)}$$

Secondly, we obtain a new upper bound
\begin{equation*}
\begin{split}
\exp(-\delta^2s/3)&=\exp(-\ceil{c\cdot(h+\log(1/\varepsilon))}/108)\leq\exp(-c/108\cdot(h+\log(1/\varepsilon))<\\&<\exp(-c/108\cdot\log(\varepsilon^{-1}))\leq \exp(-\log(\varepsilon^{-1}))=\varepsilon.
\end{split}
\end{equation*}

Combining the two obtained bounds we have
$$\Prb{Y<h}\leq\Prb{Y<(\frac{4}{3}(1-\delta)-1)s}\leq \exp(-\delta^2s/3)<\varepsilon,$$
and hence
$$\Prb{Y<h}<\varepsilon.$$

Considering the probability of the opposite event we finally get
$$\mathrm{\mathbf{Pr}}_{success}=\Prb{Y\geq h}>1-\varepsilon.$$
\Endproof \end{proof}

\section{Quantum Sorting Algorithm. Second Approach}\label{apx:sort-with-dfs}
Assume that we want to construct have a result list. We use $list$ variable as a result.

We use the recursive procedure $\textsc{GetListByTree}(v) $ for in-order traversal (dfs) of the searching tree. Here $v$ is the processing node. Assume that we have $\textsc{GetTheLeftChild}(v) $ for obtaining the left child of $v$, $\textsc{GetTheRightChild}(v) $ for obtaining the right child of $v$; $\textsc{GetIndexOfString}(v) $ for obtaining the index of the string $\alpha(v)$ that is stored in $v$. The procedure is presented in Algorithm \ref{alg:sort-dfs}.

\begin{algorithm}[ht]
    \caption{The recursive procedure $\textsc{GetListByTree}(v) $ for in-order traversal (dfs) of the searching tree} \label{alg:sort-dfs}
    \begin{algorithmic}
        \If{$v \neq NULL$}
        \State $\textsc{GetListByTree}(\textsc{GetTheLeftChild}(v)) $
        \State $list\gets list \circ \textsc{GetIndexOfString}(v) $
        \State $\textsc{GetListByTree}(\textsc{GetTheRightChild}(v)) $
        \EndIf
    \end{algorithmic}
\end{algorithm}

The total sorting algorithm is Algorithm \ref{alg:sort2}.
\begin{algorithm}[ht]
    \caption{Quantum string sorting algorithm} \label{alg:sort2}
    \begin{algorithmic}
        \For{$i\in\{0,\dots,n-1\}$}
            \State $\textsc{Add}(i)$
        \EndFor
        \State $list\gets []$\Comment{Initially, the list is empty}
         \State $\textsc{GetListByTree}(\textsc{GetTheRoot}())$
        \For{$i\in\{0,\dots,n-1\}$}
            \State $j_i\gets list[i]$
        \EndFor
    \end{algorithmic}
\end{algorithm}
%%%%%%%%%%%%%%%%%%%%%%%%%%%%%%%%%%%%%%%%%%
%%%%%%%%%%%%%%%%%%%%%%%%%%%%%%%%%%%
\section{Complexities of \(\RADIX_{n,1}\) and \(\RADIX_{2,l}\)}\label{apx:radix-maj-firstone}

This section contains a formal statement and proof for the note in the beginning of Section~\ref{sub:lower-bound} that reads as follows.

Note that in the case of \(l = 1\), the function \(\RADIX_{n,1}\) can be used to
compute the majority function. We use \(\RADIX_{n,1}\) to sort strings and the
\(n/2\)-th string is a value of the majority function. Therefore, we expect that
complexity of \(\RADIX_{n,1}\) should be \(\Omega(n)\).
In the case of \(n = 2\), the function \(\RADIX_{2,l}\) is similar to the OR function,
so we expect that it requires \(\Omega(\sqrt{l})\) queries.

Formally, we prove the following
\begin{lemma}
For positive integers \(n,l\), let \(\mathrm{MAJ}_n: \{0,1\}^n \to \{0,1\}\) be a majority function, and \(\mathrm{SEARCH1}_l: \{0,1\}^l \to \{1,\ldots,l\}\) be a function that returns the minimal index of one in the input.
\begin{align}
\QC(\RADIX_{n,1}) &\ge \QC(\mathrm{MAJ}_n) \label{eq:radix-n1-maj}\\
\QC(\RADIX_{2,l}) &\ge \QC(\mathrm{OR}_l) \label{eq:radix-2l-or}
\end{align}
\end{lemma}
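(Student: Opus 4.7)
The plan is to prove each inequality by a direct reduction: I will build, from any algorithm for $\RADIX$, an algorithm for the simpler function that uses essentially the same number of queries, so that the standard $\Omega(n)$ lower bound for $\mathrm{MAJ}_n$ and $\Omega(\sqrt{l})$ lower bound for $\mathrm{OR}_l$ transfer to $\RADIX$.

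For \eqref{eq:radix-n1-maj} I would fix odd $n$ and treat an input $x\in\{0,1\}^{n}$ to $\mathrm{MAJ}_n$ as a list of $n$ length-one strings $s^{i}:=x_i$. Running any bounded-error algorithm for $\RADIX_{n,1}$ on this list introduces no query overhead, since every string-bit query the sorting algorithm issues is answered by a single query to $x$. The returned permutation $(j_0,\dots,j_{n-1})$ places all zero bits of $x$ strictly before all one bits, so the bit at the median position $j_{(n-1)/2}$ equals the majority value (because the number of zeros is strictly greater or strictly less than $n/2$). One further query recovers $x_{j_{(n-1)/2}}$ and outputs it, giving $\QC(\mathrm{MAJ}_n)\le \QC(\RADIX_{n,1})+1$, which yields the claimed inequality up to an additive constant absorbed into the $\Omega(n)$ majority lower bound.

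For \eqref{eq:radix-2l-or} I would, given $y\in\{0,1\}^{l}$, set $s^{0}:=y$ and $s^{1}:=0^{l}$, the second string being a constant known in advance and hence requiring no queries to simulate. Running a $\RADIX_{2,l}$ algorithm on $(s^{0},s^{1})$ therefore costs at most $\QC(\RADIX_{2,l})$ queries to $y$. By the tie-breaking rule in the definition of $\RADIX$, the output permutation is $(0,1)$ exactly when $s^{0}=s^{1}=0^{l}$, i.e.\ $\mathrm{OR}_l(y)=0$, and $(1,0)$ whenever $y>0^{l}$ lexicographically, i.e.\ $\mathrm{OR}_l(y)=1$. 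The answer is read off the permutation directly, so $\QC(\mathrm{OR}_l)\le \QC(\RADIX_{2,l})$.

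The one delicate point I expect is the extra query in the majority reduction. It arises because the sorted permutation alone does not determine the zeros/ones split in the corner case of the identity permutation, which is produced both by the all-zeros and by the all-ones input. The cleanest fix will be simply to note that a single additional query is irrelevant for the $\Omega(n)$ conclusion; alternatively, one could artificially pad the input with a known pivot symbol to eliminate the ambiguity. Apart from this, both reductions are elementary.
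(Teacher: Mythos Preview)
Your proposal is correct and follows essentially the same reductions as the paper: for \eqref{eq:radix-n1-maj} the paper also outputs $x_{\sigma_{\lceil n/2\rceil}}$ after sorting (you are simply more explicit about the one extra query this costs), and for \eqref{eq:radix-2l-or} the paper likewise pairs $y$ with the all-zero string and reads $\mathrm{OR}_l$ off the resulting permutation. The only cosmetic difference is the order of the two strings in the second reduction; your choice $(s^0,s^1)=(y,0^l)$ is in fact the ordering for which the tie-breaking case analysis goes through as stated.
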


\begin{proof}
Consider the input \(x \in \{0,1\}^n\). Suppose that \(\RADIX_{n,1}(x) = \sigma = (\sigma_1, \ldots, \sigma_n)\).

Then the proof of~\eqref{eq:radix-n1-maj} follows from the fact that
\[
\mathrm{MAJ}_n(x) = x_{\sigma_{\lceil n/2 \rceil}}.
\]

Take an input \(y \in \{0,1\}^l\). Let \(y'\in \{0,1\}^{2l}\) be a pair of words \(\vec{0} = 0\ldots 0 \in \{0,1\}^l\) and \(y\). Now we see that
\[
\mathrm{OR}_l(y) = \begin{cases} 
1, & \quad \text{if } \RADIX_{2,l}(y') = (1 0), \\ 
0, & \quad \text{if } \RADIX_{2,l}(y') = (0 1).
\end{cases}
\]
and this completes the proof.
\end{proof}
%%%%%%%%%%%%%%%%%%%%%%%%%%%%%%%%%%%%%555
\section{Proof of Lemma \ref{lem:median}}\label{apx:median}
\textbf{Lemma \ref{lem:median}.} \textit{Computing \(\RADIX_{n,l}\) is not easier than computing \(\SIGNUM_{n,l}\):
  $
    \QC(\RADIX_{n,l}) \ge \QC(\SIGNUM_{n,l}).
  $}

\begin{proof}
  We can compute \(\SIGNUM\) by computing the sign of the permutation that is a result of computing \(\RADIX\). It
  requires no call to the oracle, we can simulate the classical algorithm for computing the sign of the permutation.
  \Endproof
\end{proof}
%%%%%%%%%%%%%%%%%%%%%%%%%%%%%%%%%%%%%
\section{Proof of Lemma \ref{lm:signum}}\label{apx:signum}
\textbf{Lemma \ref{lm:signum}.} \textit{
$\SIGNUM_{n,l} = \FST_{l/\log n} \circ \big( \SG_{n}, \ldots, \SG_{n} \big),$
where each \(\SG_{n}\) acts on its own band.
}
\begin{proof}
  Consider an input \(w = (w_{1}, w_{2}, \ldots, w_{l/\log n})\) for \(\SIGNUM_{n,l}\), where each \(w_{i}\) is a proper
  band, i.e. either \(w_{i} = \vec{0} = (0, \ldots, 0)\) or
  \(w_{i} = \sigma = (\sigma_{i_1}, \sigma_{i_2}, \ldots, \sigma_{i_n}) \in \mathbb{S}_{n}\).

  The statement of the lemma is that
  \[\SIGNUM_{n,l}(w) = \FST_{l/\log n}(\SG_{n}(w_{1}), \SG_{n}(w_{2}), \ldots, \SG_{n}(w_{l/\log n}))\]
  for every choice of proper bands \(w_{i}\).

  If input is all-zero, i.e. \(w_{1} = w_{2} = \ldots = w_{n} = \vec{0}\), then \(\SIGNUM_{n,l} = 0\) and
  \(\SG_{n}(w_{i}) = \bot\) for \(i \in \{1,2,\ldots,l/\log n\}\). Therefore, the statement is correct, because
  \(\FST_{l/\log n}(\bot, \ldots, \bot) = 0\) by definition.

  Suppose that input is not all-zero.

  Let \(j\) be a minimal \(i\), such that \(w_{i} \neq \vec{0}\). Let us denote \(w_{j} = \sigma \in \mathbb{S}_{n}\).
  Therefore, \(w = (\vec{0}, \ldots, \vec{0}, \sigma, \ldots)\). The result of sorting \(w\) is completely defined by
  \(\sigma\), because \(\sigma\) contains each number from \(0\) to \(n-1\) exactly once:
  \(\RADIX_{n,l}(w) = \sigma^{-1}\).

  So, on the one hand, we have \(\SIGNUM_{n,l}(w) = \sgn(\RADIX_{n,l}(w)) = \sgn(\sigma^{-1}) = \sgn(\sigma)\).

  On the other hand, \(\SG_{n}(w_{i}) = \bot\) for \(i < j\) and \(\SG_{n}(w_{j}) = \sgn(w_{j}) = \sgn(\sigma)\).
  Therefore,
  \[\FST_{l/\log n}(\SG_{n}(w_{1}), \ldots, \SG_{n}(w_{l/\log n})) = \FST_{l/\log n}(\bot, \ldots, \bot, \sgn(\sigma), \ldots) = \sgn(\sigma),\]
  and the lemma is proved.
  \Endproof
\end{proof}

%%%%%%%%%%%%%%%%%%%%%%%%%%%%%%%%%%%%%%%%%%555
\section{Proof of Lemma \ref{lem:adv-fst}}\label{apx:adv-fst}
\textbf{Theorem \ref{lem:adv-fst}.} \textit{
The following statement is right:
 $
    \ADV(\FST_{m}) = \Omega(\sqrt{m}).
  $
}
\begin{proof}
  Let us denote \(\bot^{m} = (\bot, \ldots, \bot)\) a string that contains only \(\bot\) symbols and let us denote \(\bot^{m}_{i \mapsto x} = (\bot, \ldots, \bot, x, \bot, \ldots, \bot)\) a string that contains \(\bot\) symbols at all positions except position \(i\) that is \(x\).

  Let us define a matrix \(A\) with rows and columns indexed by strings from \(\mathcal{I} = \{0,1,\bot\}^{m}\) as follows.
  \[
    A(\sigma, \tau) =
    \begin{cases}
      1, \quad \text{ if } \sigma = \bot^{m} \text{ and } \tau = \bot^{m}_{i \mapsto 1} \text{ for some } i \in \{1,\ldots,m\},\\
      0, \quad \text{ otherwise}
    \end{cases}
  \]

  \[
    \|A\| = \max_{x \neq 0}\frac{\|Ax\|}{\|x\|} \ge \frac{\|Ax_{0}\|}{\|x_{0}\|} = \frac{m-1}{\sqrt{m-1}} = \sqrt{m-1},
  \]
  where \(x_{0}\) is a zero-one vector indexed by \(\mathcal{I}\), such that \(x_{0}(\tau) = 1\) iff \(\tau = \bot^{m}_{i \mapsto 1}\) for some \(i \in \{1,\ldots,m\}\).

  Let us fix some \(i \in \{1,\ldots,m\}\). Then
  \[
    (A \circ D_{i})(\sigma, \tau) =
    \begin{cases}
      1, \quad \text{ if } \sigma = \bot^{k} \text{ and } \tau = \bot^{k}_{i \mapsto 1},\\
      0, \quad \text{ otherwise}
    \end{cases}
  \]

  Therefore, \(\|A \circ D_{i}\| = 1\) and \(\ADV(\FST_{m}) = \sqrt{m-1}\).
  \Endproof
\end{proof}
%%%%%%%%%%%%%%%%%%%%%%%%%%%%%%%%%%%%%%%%%%%%%%%5
\section{Proof of Lemma \ref{lem:sg-fst}}\label{apx:sg-fst}
\textbf{Lemma \ref{lem:sg-fst}.} \textit{
We have
 $
    \ADV(\SG_{n}) = \Omega(n)
  $.
}
\begin{proof}
  Let us denote possible \(n \times \log n\) input matrices for \(\SG_{n}\) by integer vectors of size \(n\) and let us denote \(\mathcal{I} = \{\vec{0}\} \cup \mathbb{S}_{n} \), where \(\vec{0}=(0,\ldots,0)\).

  We use latin letters \(x,y,\ldots\) to denote any element of \(\mathcal{I}\) and greek letters \(\sigma, \tau, \ldots\) to denote elements of \(\mathbb{S}_{n}\).

  Let us define a matrix \(A\) with rows and columns indexed by \(\mathcal{I}\) as follows.
  \begin{align*}
    A(\vec{0}, y) &= A(x, \vec{0}) = 0, \\
    A(\sigma, \tau) &= \begin{cases}
      1, \quad \text{ if } \sgn(\sigma) = 0, \sgn(\tau) = 1, \sigma = \tau \circ (s t) \text{ for some } s,t \in [n], \\
      0, \quad \text{ otherwise}.
    \end{cases}
  \end{align*}

  Each row of \(A\) has \(n(n-1)\) non-zero entries, so we obtain \(\|A\| \ge n(n-1) \).

  Let us fix some \(i \in [n\log n]\). Denote \(B_{i} = A \circ D_{i}\).

  Fixing \(i\)-th bit in the input matrix means that we fix \(k\)-th bit in \(j\)-th entry in the input permutation \(\rho \in \mathbb{S}_{n}\). Consider an entry \(B_{i}(\sigma, \tau)\). It is non-zero iff \(\sigma = \tau \circ (s t)\). At the same time, \(\sigma_{j}\) and \(\tau_{j}\) have to differ in \(k\)-th bit, so \((s t)\) has to replace \(j\)-th entry with one of the \(n/2\) entries that differ in \(k\)-th bit.

  Each row and each column of \(B_{i}\) has \(n/2\) non-zero entries. Therefore, by Lemma~\ref{lem:circ-prod} we have
  \(\|(A \circ D_{i})\| \le n/2\), and \(\ADV(\SG_{n}) = \Omega(n)\).
  \Endproof
\end{proof}
%%%%%%%%%%%%%%%%%%%%%%%%%%%%%%%%%%%
\section{Algorithms and Lower Bounds for Auto-Complete Problem}\label{apx:complement}

Here we present the full description of solution for the Auto-Complete Problem.
\paragraph{Problem.} Assume that we use some constant size alphabet, for example, binary, ASCII or Unicode. We work with a sequence of strings $\cS=(s^{i_1},\dots,s^{i_{|\cS|}})$, where $|\cS|$ is the length of the sequence; and $i_j$ are increasing indexes of strings. In fact, the index $i_j$ is the index of the query for adding this string to $\cS$.
Initially, the sequence $\cS$ is empty.
Then, we have $n$ queries of two types.
The first type is adding a string $s$ to the sequence $\cS$. Let $\#(u)=|\{j: u=s^j, j\in\{i_1,\dots,i_{|\cS|}\}\}|$ be a number of occurrence (or ``frequency'') of a string $u$ among $\cS$.
The second type is querying the most frequent complement from $\cS$ of a string $t$.
 Let us define it formally. 
 If $t$ is a prefix of $s^j$, then we say that $s^j$ is a complement of $t$.
 Let $D(t)=\{s^j:j\in\{i_1,\dots,i_{|\cS|}\}, s^j$ is a complement of $t\}$ be the set of complements for $t$, and let $md(t)=max\{\#(s^r): r\in\{i_1,\dots,i_{|\cS|}\}, s^r\in D(t)\}$ be the maximal ``frequency'' of strings from $D(t)$. The problem is to find the index $mi(t)=min\{r:r\in\{i_1,\dots,i_{|\cS|}\},  s^r\in D(t), \#(s^r)=md(t)\}$.

We use a Self-Balanced Search tree for our solution. A node $v$ of the tree stores 4-tuple $(i,c,j,jc)$, where $i$ is an index of a string $s^i$ that is ``stored'' in the node, and $c=\#(s^i)$. The tree is a Search tree by this strings $s^i$ similar to storing strings in Section \ref{sec:sort}. For comparing strings, we use a quantum procedure from Lemma \ref{lm:strcmp}. Therefore, our tree is noisy.
The index $j$ is the index of the most ``frequent'' string in the sub-tree which root is $v$, and $jc=\#(s^j)$. Formally, for any vertex $v'$ from this sub-tree if $(i',c',j',jc')$ is associated with $v'$, then $c'<jc$ or ($c'=jc$ and $i'\geq j$).

Initially, the tree is empty. Let us discuss processing the first type of queries. We want to add a string $s$ to $\cS$. We search a node $v$ with associated $(i,c,j,jc)$ such that $s^i=s$. If we can find it, then we increase $c\gets c+1$. It means $j$ parameter of the node $v$ or its ancestors can be updated. There are at most $O(\log n)$ ancestors because height of the tree is $O(\log n)$. So, for each  ancestor of $v$ that associated with $(i',c',j',jc')$, if $jc'<c$ or ($jc'=c$ and $j'>i$), then we update $j'\gets i$ and $jc\gets c$.

If we cannot find the string $s$ in $\cS$, then we added a new node to the tree with associated 4-tuple $(r,1,r,1)$, where $r$ is the index of the query. Note that if we re-balance nodes in the red-black tree, then we easily can recompute $j$ and $jc$ elements of nodes. 

Assume that we have a $\textsc{Search}(s)$ procedure that returns a node $v$ with associated $(i,c,j,jc)$ where $s=s^i$. If there is no such a node, then the procedure returns $NULL$. A procedure $\textsc{AddAString}(r)$ adding a node $(r,1,r,1)$ to the tree. A procedure $\textsc{GetTheRoot}$ returns the root of the search tree. The processing of the first type of queries is presented in Algorithm \ref{alg:compl1} 
\begin{algorithm}[ht]
    \caption{Processing a query of the first type with an argument $s$ and a query number $r$} \label{alg:compl1}
    \begin{algorithmic}
        \State $v \gets \textsc{Search}(s)$
        \If{$v\neq NULL$}
         \State $(i,c,j,jc)$ is associated with $v$
         \State $c\gets c+1$
          \If{$jc<c$ or $(jc=c$ and $j\geq i)$}
         \State $jc\gets c$
         \State $j\gets i$
         \EndIf
         \While{$v\neq \textsc{GetTheRoot}$}
         \State $v\gets \textsc{Parent}(v)$
         \State $(i',c',j',jc')$ is associated with $v$ 
         \If{$jc'<c$ or $(jc'=c$ and $j'\geq i)$}
         \State $jc'\gets c$
         \State $j'\gets i$
         \EndIf
         \EndWhile
        \Else
         \State $\textsc{AddAString}(r)$
        \EndIf
    \end{algorithmic}
\end{algorithm}

Let us discuss processing the second type of queries. All strings $s^i$ that can be complement for $t$ belongs to the set $C(t)=\{s^r: r\in\{i_1,\dots,i_{|\cS|}\}, s^r\geq t$ and $s^r<t'$\}. Here we can obtain $t'$ from the string $t$ by replacing the last symbol by the next symbol from the alphabet. Formally, if $t=(t_1,\dots,t_{|t|-1},t_{|t|})$, then $t'=(t_1,\dots,t_{|t|-1},t'_{|t|})$, where the symbol $t'_{|t|}$ succeeds $t_{|t|}$ in the alphabet. We can say, that $C(t)=D(t)$. The query processing consist of three steps.

\textbf{Step 1.} We search for a node $v_{Rt}$ such that $t$ should be in the left sub-tree of $v$ and $t'$ should be in the right sub-tree of $v$. Formally, $\beta(v_{Rt})\leq t\leq \alpha(v_{Rt})$, and $\alpha(v_{Rt})< t'\leq \gamma(v_{Rt})$.
For implementing this idea the procedure $\textsc{IsItTheTarget}(v)$ checks the following condition:
\[\big(\textsc{Compare}(\beta(v),t)\leq 0\mbox{ and }\textsc{Compare}(t,\alpha(v))\leq 0\big)\]\[\mbox{ and  }\]\[\big(\textsc{Compare}(\alpha(v),t')<0\mbox{ and }\textsc{Compare}(t',\gamma(v))\leq 0\big).\]
The procedure $\textsc{SelectChild}(v)$ returns the right child if $t>\alpha(v)$, i.e. \[\textsc{Compare}(t,\alpha(v))>0,\] and returns the left child otherwise.

If we come to the null vertex, then there is no complements of $t$. If we find $v_{Rt}$, then we do next two steps. 
In that two steps, we compute the index of the required string $j_{ans}$ and $jc_{ans}=\#(s^{j_{ans}})$.

\textbf{Step 2.} Let us look to the left sub-tree with $t$. Let us find a node $v_L$ that contains an index $i_L$ of  the minimal string $s^{i_L}\geq t$.
For implementing this idea the procedure $\textsc{IsItTheTarget}(v)$ checks wether the current string is $t$. Formally, it checks $\textsc{Compare}(\alpha(v),t)=0$. Additionally, the procedure saves the node $v_{L}\gets v$ if $\alpha(v)\geq t$, i.e., $\textsc{Compare}(t,\alpha(v))\geq 0$.
The procedure $\textsc{SelectChild}(v)$ works as for searching $t$. It returns the right child if $t>\alpha(v)$, i.e. $\textsc{Compare}(t,\alpha(v))>0$, and returns the left child otherwise. 
In the end, the target node is stored in $v_L$.

Then, we go up from this node. Let us consider a vertex $v$. If it is the right child of $\textsc{Parent}(v)$, then it is bigger than the string from $\textsc{Parent}(v)$ and the left child's sub-tree, so we do nothing. If it is the left child of $\textsc{Parent}(v)$, then it is less than the string from $\textsc{Parent}(v)$ and all strings from the right child's sub-tree, so we update $j_{ans}$ and $jc_{ans}$ by values from the parent node and the right child. Formally, if $(i_p,c_p,j_p,jc_p)$ for the $\textsc{Parent}(v)$ node and $(i_r,c_r,j_r,jc_r)$ for the right child node, then we do the following actions. If $c_p>jc_{ans}$ or ($c_p=jc_{ans}$ and $i_p<j_{ans}$), then $j_{ans}\gets i_p$ and $jc_{ans}\gets c_p$.    If $jc_r>jc_{ans}$ or ($jc_r=jc_{ans}$ and $j_r<j_{ans}$), then $j_{ans}\gets j_r$ and $jc_{ans}\gets jc_r$.
This idea is presented in Algorithm \ref{alg:compl22}

\begin{algorithm}[ht]
    \caption{Obtaining the answer of Step 2 by $v_L$} \label{alg:compl22}
    \begin{algorithmic}
        \State $v \gets v_L$
         \While{$v\neq \textsc{Parent}(v_{Rt})$}
         \State $parent\gets \textsc{Parent}(v)$
         \If{$v=\textsc{LeftChild}(parent)$}
          \State $(i_p,c_p,j_p,jc_p)$ is associated with $parent$ 
          \If{$c_p>jc_{ans}$ or ($c_p=jc_{ans}$ and $i_p<j_{ans}$)}
          \State $j_{ans}\gets i_p$ 
           \State $jc_{ans}\gets c_p$
          \EndIf
          \State $(i_r,c_r,j_r,jc_r)$ is associated with $\textsc{RightChild}(parent)$
          \If{$\textsc{RightChild}(parent)\neq NULL$}
          \If{ ($c_r>jc_{ans}$ or  ($c_r=jc_{ans}$ and $i_r<j_{ans}$))}
          \State $j_{ans}\gets i_r$ 
           \State $jc_{ans}\gets c_r$
          \EndIf
          \EndIf
         \EndIf
         \State $v\gets parent$
         \EndWhile
    \end{algorithmic}
\end{algorithm}

\textbf{Step 3.} Let us look to the right sub-tree with $t'$. Let us find a node $v_R$ that contains an index $i_R$ of  the maximal string $s^{i_R}< t'$. Then, we go up from this node and do the symmetric actions as in Step 2.

Each of these three steps requires $O(\log n)$ running time because each of them observes nodes of a single branch. The complexity of the quantum algorithm is presented in Theorem \ref{th:comlete-quantum}. Before presenting the theorem, let us present a lemma that helps us to prove quantum and classical lower bounds.
\begin{lemma}\label{lm:unstruct-search}
Auto-complete problem at least as hard as unstructured search $1$ among $O(L)$ bits.
\end{lemma}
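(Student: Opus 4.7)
The plan is to reduce the quantum unstructured search problem on $N$ bits (equivalently, computing $\mathrm{OR}_N$) to the Auto-Complete Problem with total query length $L = \Theta(N)$, and then invoke the standard $\Omega(\sqrt{N})$ lower bound for quantum search to conclude $\Omega(\sqrt{L})$.

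First I would construct the reduction. Given an input $x \in \{0,1\}^N$ to the search problem, form an Auto-Complete instance consisting of exactly $N+1$ queries: for each $i \in \{1,\ldots,N\}$ issue a type-one (add) query whose inserted string is the single character $x_i$, so each bit of $x$ becomes the one-character payload of one add query; then issue a final type-two query asking for the most frequent complement of the prefix $t = \text{``}1\text{''}$.

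Next I would verify correctness. If some $x_i = 1$, then the string $\text{``}1\text{''}$ belongs to $\cS$, the complement set $D(t)$ is non-empty, and the Auto-Complete output is a valid index $mi(t)$, from which we read $\mathrm{OR}_N(x) = 1$. If instead every $x_i = 0$, then $D(t) = \emptyset$ and Auto-Complete reports that no answer exists, so $\mathrm{OR}_N(x) = 0$. Thus the Auto-Complete output determines $\mathrm{OR}_N(x)$ with constant post-processing.

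Finally I would track the query complexity. The total sum of lengths of the constructed queries is $L = N + 1 = \Theta(N)$, so any quantum Auto-Complete algorithm of running time $T(L)$ produces, via this reduction, a quantum algorithm for $\mathrm{OR}_N$ of running time $O(T(L))$; combining with the standard bound $Q(\mathrm{OR}_N) = \Omega(\sqrt{N}) = \Omega(\sqrt{L})$ yields the claim. The one point requiring care, which I would treat explicitly, is the oracle simulation: each symbol of the Auto-Complete input is either a fixed, algorithm-known token (the query-type headers and the one-character prefix $\text{``}1\text{''}$ of the final query) or exactly one bit of $x$ (the payload character of an add query). Hence every Auto-Complete oracle call can be simulated by at most one call to the $x$-oracle, so there is no hidden overhead, and the lower bound transfers cleanly from search to Auto-Complete.
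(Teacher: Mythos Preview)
Your reduction is essentially correct and a good deal simpler than the paper's, but it leans on one point the problem statement does not actually guarantee: you need the Auto-Complete algorithm to signal ``no answer'' when $D(t)=\emptyset$. The definition of $mi(t)$ is only given for non-empty $D(t)$, so on the all-zeros input your type-two query falls outside the problem's domain, and a correct Auto-Complete algorithm is under no obligation to behave meaningfully there. The fix is trivial---append one extra add-query with the fixed string ``$1$'' at index $N{+}1$ and then test whether the returned index is $\le N$ or equals $N{+}1$---but as written the argument has this small gap.

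The paper takes a different route that sidesteps the empty-$D(t)$ issue entirely. It packs the $N$ search bits into the tails of $m\approx n/2$ strings of length $k$ (so $N=\Theta(mk)=\Theta(L)$), adds another $m$ copies of a second fixed string $(0,1,0,\dots,0)$, and issues the type-two query with the one-character prefix $t=(0)$. Both candidate strings always lie in $D(t)$; the two cases (no hidden $1$ versus a single hidden $1$) are distinguished not by presence or absence of a complement but by the \emph{tie-breaking rule}: the frequencies are $m$ vs.\ $m$ in one case and $m$ vs.\ $m{-}1$ in the other, so the returned index changes. Your construction is more direct; the paper's is more robust to the promise-problem reading of the definition and, as a side benefit, establishes the bound even for instances with few long strings rather than only for many length-$1$ strings.
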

\begin{proof}
Assume that the alphabet is binary. For any other case we just consider two letter from the alphabet.
Assume that all strings from queries of the first type have length $k$.

Let $m = \lfloor(n-1)/2\rfloor$.
Let us consider the following case. We have $m$ queries of the first type of the next form: $s^i = (0,0,s^i_3,\dots,s^i_k)$, Here $s^i_j=0$ for all $i\in\{1,\dots,m\}$ and $j\in\{3,\dots,k\}$ except one bit. We have two cases.

The first case is the following. There is only one pair $(r,w)$ such that $r\in\{1,\dots,m\}$, $w\in\{3,\dots,k\}$, and $s^r_w=1$. In the second case there is no such a pair.

Next  $m$ queries of the first type of the next for: $(0,1,0,\dots,0)$.

The if $n-1$ is odd, then we add a query of the first type of the next for: $(1,\dots,1)$.

The last query of the second type of the form $t=(0)$.

If we have the first case, then $\#(0,1,0,\dots,0)=m$, $\#(0,0,0,\dots,0)=m-1$ and the answer is  $(0,1,0,\dots,0)$. If we have the second case, then $\#(0,1,0,\dots,0)=m$, $\#(0,0,0,\dots,0)=m$ and $(0,0,0,\dots,0)$ has a smaller index. Therefore, the answer is  $(0,0,0,\dots,0)$.

Hence, the answer for the input is at least as hard as distinguish between these two cases. At the same time, it requires search $1$ among $O((k-3)m)=O(L)$ bits.
\Endproof
\end{proof}
Based on the presented lemma and the discussed algorithm, we can resent quantum lower and upper bounds for the problem in the next theorem. 

\textbf{Theorem \ref{th:comlete-quantum}.}\textit{
The quantum algorithm with a noisy Self-Balanced Search tree for Auto-Complete Problem has $O(\sqrt{nL}\log n)$ running time and error probability $1/3$, where $L$ is the sum of lengths of all queries. Additionally, the lower bound for quantum running time is $\Omega(\sqrt{L})$.
}
\begin{proof}
Let us start with the upper bound. Let us consider processing of the first type of queries. Here we add a string $s$. Firstly, we find the target node with $O(\sqrt{|s|}\log n)$ running time and $1/n^2$ error probability according to Corollary \ref{cr:bst-compl}. Then, we consider at most $O(\log n)$ ancestors for updating with $O(\log n)$ running time and no error. So, processing the first type of queries works with $O(\sqrt{|s|}\log n)$ running time and  $1/n^2$ error probability.

 Let us consider processing of the second type of queries. Here we search for complement for a string $t$. Searching nodes $v_{Rt}$, $v_L$ and $v_R$ works with $O(\sqrt{|t|}\log n)$ running time and $1/n^2$ error probability according to Corollary \ref{cr:bst-compl}. Then,  we consider at most $O(\log n)$ ancestors for updating with $O(\log n)$ running time and no error.  So, processing the second type of queries works with $O(\sqrt{|t|}\log n)$ running time and  $3/n^2$ error probability.
 
 Let $m_1,\dots,m_n$ be lengths of strings from queries. So, the total complexity is 
 \[O(\sqrt{m_1}\log n+\dots+\sqrt{m_n}\log n)=O((\sqrt{m_1}+\dots+\sqrt{m_n})\log n)=\]\[
 O(\sqrt{n(m_1+\dots+m_n)}\log n=O(\sqrt{nL}\log n)\]
 The last two equalities due to Cauchy--Bunyakovsky--Schwarz inequality and $L=m_1+\dots+m_n$.
 
 Error probability of processing a query is at most $3/n^2$. Therefore, error probability of processing $n$ queries is at most $0.1$ due to all error events are independent.
 
 Let us discuss the lower bound. It is known \cite{bbbv1997} that the quantum running time for the unstructured search among $O(L)$ variables is $\Omega(\sqrt{L})$. So, due to Lemma \ref{lm:unstruct-search} we obtain the required lower bound.
\Endproof \end{proof}

Let us consider the classical (deterministic or randomize) case. If we use the same Self-balanced search tree, then the running time is $O(L\log n)$. At the same time, if we use the Trie (prefix tree) data structure 
%\cite{d59,b98,b2008,knuth73}
\cite{knuth73}, then the complexity is $O(L)$.

We store all string of $\cS$ in the trie. For each terminal node we store ``frequency'' of the corresponding string. For each node $v$ (even non-terminal) that corresponds to a string $u$ as a path from the root to $v$, additionally to the regular information we store the index of required complement for $t$ and its frequency. When we process the first type of query, we update the frequency in the terminal node and update additional information in all ancestor nodes because they store all possible prefixes of the string. For processing the query of the second type, we just find the required node and take the answer in the additional information of the node.

We can show that it also the lower bound for classical case.

\textbf{Lemma \ref{lm:comlete-classical}.} \textit{
The classical running time for Auto-Complete Problem  is $\Theta(L)$, where $L$ is the sum of lengths of all queries. }
\begin{proof}
Let us start with the upper bound. Similarly to the proof of Theorem \ref{th:comlete-quantum}, we can show that the running time of processing a query of both types is $O(|s|)$ or $O(|t|)$ depends on the type of a query.
 Let $m_1,\dots,m_n$ be lengths of strings from queries. So, the total complexity is $O(m_1+\dots+m_n)=O(L)$.
 
 Let us discuss the lower bound. It is known \cite{bbbv1997} that the classical running time for the unstructured search among $O(L)$ variables is $\Omega(L)$. So, due to Lemma \ref{lm:unstruct-search} we obtain the required lower bound.
\Endproof \end{proof}

If $O(n)$ strings  have length at least $\Omega((\log_2 n)^2)$, then we obtain a quantum speed-up.

\end{document}